\newtheorem{theorem}{Theorem}
\newtheorem{corollary}[theorem]{Corollary}
\newcommand{\supp}{\operatorname{supp}} 
\newcommand{\cspace}{\ensuremath{\R^3 \times (-\pi R, \pi R)}}
\newcommand{\ef}{\ensuremath{e_\mathrm{4}^2}}
\newcommand{\et}{\ensuremath{e_\mathrm{3}^2}}
\newcommand{\parder}[1][r]{\ensuremath{\frac{\partial}{\partial #1}}}
\newcommand{\pardertwo}[1][r]{\ensuremath{\frac{\partial^2}{\partial #1^2}}}
\newcommand{\N}{\mathbb{N}}
\newcommand{\R}{{\mathbb{R}}}
\newcommand{\C}{{\mathbb{C}}}
\newcommand{\dd}{{{\rm d}}}
\newcommand{\ii}{{\rm i}}
\renewcommand{\H}{{\mathcal{H}}}
\newcommand{\Dom}{{\operatorname{Dom}}}
\newcommand{\cf}{\emph{cf.}}
\newcommand{\ie}{{\emph{i.e.}}}
\newcommand{\eg}{{\emph{e.g.}}}
\newcommand{\Zc}{1}
\newcommand{\Rc}{R_{\rm crit}}
\newcommand{\Hsa}{H_{\rm sa}}
\newcommand{\Hm}{H_{\rm min}}
\newcommand{\spes}{\sigma_{\rm ess}}
\newcommand{\Vc}{V_{\rm circ}}
\newcommand{\Hc}{H_{\rm circ}}
\begin{document}
\title[Hydrogen atom in space with a compactified extra dimension]{Hydrogen atom in space with a compactified extra dimension and potential defined by Gauss' law}

\author{Martin~Bure\v s}
\address[Martin~Bure\v s]{
Institute for Theoretical Physics and Astrophysics,
Masaryk University, Kotl\'a\v{r}sk\'a 2, 61137 Brno, Czech Republic
}
\email{bures@physics.muni.cz}
\thanks{We would like to thank Rikard von Unge, Klaus Bering, Jean-Marie Barbaroux and David Krej\v ci\v r\'ik for very useful discussions and suggestions. The first author was supported by the Czech government grant agency under contract no. GA\v CR 202/08/H072.}

\author{Petr Siegl}
\address[Petr Siegl]{Mathematical Institute, University of Bern, Sidlerstrasse 5, 3012 Bern, Switzerland
	\& On leave from Nuclear Physics Institute ASCR, 25068 \v Re\v z, Czech Republic}
\email{petr.siegl@math.unibe.ch}

\keywords{Extra Dimensions; Hydrogen Atom; Quantum Stability}

\date{30th September 2014}

\maketitle

\begin{abstract}
We investigate the consequences of one extra spatial dimension for the stability and energy spectrum of the non-relativistic hydrogen atom with a potential defined by Gauss' law, \ie~proportional to~$1/|x|^2$.
The additional spatial dimension is considered to be either infinite or curled-up in a circle of radius $R$. In both cases, the energy spectrum is bounded from below for charges smaller than the same critical value and unbounded from below otherwise. 
As a consequence of compactification, negative energy eigenstates appear: if $R$ is smaller than a quarter of the Bohr radius, the corresponding Hamiltonian possesses an infinite number of bound states with minimal energy extending at least to the ground state of the hydrogen atom.
\end{abstract}
%\vfill

%\newpage
%\tableofcontents

%\flushbottom

\section{Introduction} 

String theory, predicting the existence of extra dimensions, has stimulated the study of higher-dimensional models, which often provide a deeper insight into 3-dimensional physics. Surprisingly, in spite of many years of research, a question of what happens to ordinary atoms when more than the usual three dimensions are considered seems not fully answered. 

Many existing works deal with a general case of $d$-dimensional hydrogen atoms with the potential proportional to $1/|x|$, irrespective of the number of spatial dimensions. A system defined in this way is indeed stable and one can derive wave functions and their respective eigenenergies, see \eg~\cite{
alliluyev,
burgbacher,
nieto,
jaber97,
1995atom.ph..12001N,
Nepstad2006}.

However, as it has been pointed out \eg~in \cite{Morales-1996-57,Andrew_Supplee_1990,Braga:2005ai,gurevich,1990AmJPh..58.1183E,1985AmJPh..53..893A} or recently in \cite{2012arXiv1205.3740C}, a more physically relevant potential is the solution of Maxwell's equations for a point charge in the $d$-dimensional space. In details, $V_d(|x|)\sim |x|^{2-d}$ and the corresponding  Schr\"odinger equation reads
\begin{eqnarray}\label{schr.eq}
\left(-\frac{\hbar^2}{2m}\Delta- \frac{e^2_d}{|x|^{d-2}} \right) \psi= E \psi ,
\end{eqnarray}
where $e_d$ is a $d$-dimensional charge. Questions on stability and the influence of extra dimensions on energy spectra for this potential seem still open. 
Intuitively, an instability, or the unboundedness of the energy spectrum from below, could be viewed as a fall of the electron on the nucleus, \cf~\cite[p.~116]{landau} or \cite{A.1970}.
In this regard it has only been formally shown that there is no stable hydrogen atom in spaces where the additional dimensions are of infinite extent~\cite{Andrew_Supplee_1990,gurevich,Braga:2005ai}.
Nonetheless, it is natural to ask what happens if the extra dimensions are compactified. This work fills the gap by delivering rigorous results for the cases of both a compactified and a non-compactified extra dimension. 

We focus on spaces with one extra spatial dimension, which, among higher dimensional spaces, require a special treatment since the potential can be merged with the centrifugal term arising from the radial reduction of the central potential. In other words, because of the absence of a characteristic length, a procedure leading to dimensionless quantities, which works in the treatment of the radial equation for $d\neq 4$, cannot be used here \cite{Braga:2005ai,gurevich}.

First, we consider the extra dimension to be non-compactified, \ie~we treat it equally with the other three dimensions. Recalling standard tools of functional analysis, it is showed that for $Z \in [0,\Zc)$, where $Z:=2m\ef/\hbar^2$, 
the Hamiltonian is non-negative without any bound state solutions, \cf~Thm.~\ref{thm.stab.inf}. However, it does not make sense to speak about a stable atom since this system is always ionized. For $Z > \Zc$, the spectrum is unbounded from below, \cf~Thm. \ref{thm.uc.inst}. Hence, there is no stable hydrogen atom if the additional dimension is non-compactified. These results prove the formally derived conclusions in earlier works, see \eg~\cite{Braga:2005ai,gurevich,2012arXiv1205.3740C, Bures-2007}.

The main goal of this work is to investigate the consequences of circular \emph{compactification of the extra dimension} for the energy spectrum. More precisely, the underlying configuration space is $\R^3 \times S^1$, where $S^1$ is a circle of radius~$R$. For any potential with the singularity  $1/|x|^2$, \cf~\eqref{rozklad.V}, 
the stability result remains the same as in $\R^4$, \ie~the critical value $Z=1$, \cf~Thms. \ref{thm.stab.comp} and \ref{thm.c.inst}. Results on essential spectrum and existence of an infinite number of negative eigenvalues are presented in Thms.~\ref{thm.ess} and \ref{thm.bound.comp}.

The method of images, \cf~Section~\ref{subsec.moi}, provides a potential $\Vc$, \cf~\eqref{V.def}, with the desired behaviour:  $\Vc \sim 1/|x|^2$ close to the charge, \ie~it corresponds to the potential in $\R^4$, and the usual 3-dimensional behaviour $1/r$ is restored far from the charge. 
The resulting relation $\ef=2R\et$ between the three- and four-dimensional charges and the stability results, \cf~Section \ref{sec.comp}, imply the existence of a critical compactification radius (of the order of the Bohr radius $a_0$)
\begin{equation}\label{critical-radius}
\Rc:=\frac{a_0}{4}=\frac{\hbar^2}{4m\et}\approx 1.32\times 10^{-11} \mathrm{m},
\end{equation}
above which the system becomes instable. As a corollary of Thm.~\ref{thm.bound.comp}, for $R<\Rc$, the corresponding Hamiltonian possesses an infinite number of bound states. Furthermore, the minimal energy extends at least to the ground state energy of the hydrogen atom, \cf~Cor.~\ref{cor.ground.st}.

Because $\Rc$ was found to be of the order of the Bohr radius, we did not obtain a stronger constraint on models involving one extra dimension, as the latest calculations based on the LHC data, \cf~\cite{Datta-2014-89} and also~\cite{Belanger:2012mc,Kakuda:2013kba,Long-2003}, give the value of about~$10^{-18}$m as the upper bound. 
In more detail, the lower bound on the compactification scale $R^{-1}$ obtained at $3\sigma$ and coming from the LEP collider precision electroweak tests is about $R^{-1}> 260 \mathrm{GeV}$. The CMS data give a bound $R^{-1} > 720 \mathrm{GeV}$ at 95$\%$ confidence level. Recently, calculations based on the LHC data on the Higgs 
boson decay gave an even better constraint of $R^{-1} > 1.3 \mathrm{TeV}$ at 95$\%$ C.L. This corresponds to the radius of $9.5\times 10^{-19}$m which is much more restrictive than \eqref{critical-radius}.

Stronger theoretical constraints may be obtained by calculating corrections to the spectrum and comparing it to the hydrogen atom. An attempt of this type of calculations can be found in \cite{Floratos-2011-694}, however, only a different, simplified form of the potential than that corresponding to an extra compactified dimension, is treated eventually.
Perturbative calculations of the shifts of bound state energies due to the presence of an extra compactified dimension will be subject to a separate article~\cite{preparation}.

From the mathematical point of view, the essential tools for addressing the stability issue are the Hardy inequality, \cf~\cite{Hardy-1920-6}, \cite[Sec.~I.4.1]{Krejcirik-2010-LN} or \eqref{Hardy.ineq}, and the KLMN theorem \cf~\cite[Thm.~X.17]{Reed2}, \cite[Sec.~I.2.4]{Krejcirik-2010-LN} or Thm.~\ref{thm.KLMN} in Appendix; both are classical and frequently used in quantum problems, particularly for the stability of matter, see 
\eg~\cite{
lieb2010stability,
MR2499016}. 
The former can be viewed as a control of the potential energy by the kinetic energy. The latter (an analog of the Kato-Rellich theorem \cite[Thm.~X.12]{Reed2} for forms) allows to add small (in a certain sense) perturbations to a self-adjoint operator while preserving self-adjointness and semi-boundedness. For $Z \in [0,\Zc)$, the combination of these results yields the self-adjoint and non-negative Hamiltonian; as explained in Section \ref{stability.compact}, the compactified case can be treated analogously, \cf~Thm. \ref{thm.stab.comp}.
The definition of a self-adjoint Hamiltonian for $Z>\Zc$ is more complicated since KLMN theorem is no longer applicable (the perturbation is not small). Nonetheless, we construct suitable trial functions and show that any corresponding self-adjoint Hamiltonian has the spectrum unbounded below, \cf~Thms.~\ref{thm.uc.inst} and \ref{thm.c.inst}.
In the compactified and stable case, the existence of the infinite number of boundstates is obtained by a straightforward adaptation of the standard result \cite[Thm.~XIII.64]{Reed4} based on min-max principle, \cf~\cite[Sec.~XIII.1]{Reed4} or \cite[Sec.~I.2.4]{Krejcirik-2010-LN}.

The following notation is used:
$\Omega=\cspace$ is the compactified configuration space after the parametrization (of $S^1$), $B_{\rho}(x_0) := \{ x \in \R^4 : |x_0 - x| < \rho\}$  is the ball around $x_0 \in \R^4$ with radius $\rho$, $W^{k,p}(\Omega)$ are Sobolev spaces and $\|\cdot\|_{W^{k,p}}$ the corresponding norms, \eg~$\|\psi\|_{W^{1,2}}^2=\|\nabla\psi\|^2+\|\psi\|^2$, and the Euclidean norm in $\C^n$ is denoted by $|\cdot|$.

\section{Extra dimension of an infinite extent}\label{sec.inf}
First, we review the case of an extra dimension which is infinitely extended. The Hamiltonian for a closed system of two non-relativistic point masses interacting via a central force is $-\hbar^2\Delta/2m+\ef V(|x|)$, 
where $e_4$ is the four-dimensional charge (whose unit is $\text{energy}^{1/2} \times \text{length}$) and the potential $V$ is the general solution of Poisson's equation for a point charge in 4 dimensions, \ie~$V(x) = - 1/|x|^2$.

To simplify all formulas in the sequel, we define a dimensionless parameter $Z:=2m\ef/\hbar^2$, where $m$ is the reduced electron mass. 
Writing $k^2=2mE^2/\hbar^2$, the Schr\"odinger equation~\eqref{schr.eq} takes the form 
$(-\Delta-  Z /|x|^2) \psi= k^2 \psi$.

\subsection{Stability for weak charges: application of the Hardy inequality}\label{stability.uncompact}

We recall how to introduce the Hamiltonian in a mathematically correct way in the Hilbert space $L^2(\R^4)$. Although this approach might seem to be quite meticulous, the usage of the correct framework provides an immediate answer to the question of semi-boundedness of the spectrum of $H$. Moreover, it enables us to analyze the more complicated case with a compactified extra dimension.

We start from the free (self-adjoint and non-negative) Hamiltonian 
\begin{equation}
\begin{aligned}
H_0 :=-\Delta, \qquad 
\Dom(H_0):=W^{2,2}(\R^4).
\end{aligned}
\end{equation}
We consider $V$ as a perturbation of $H_0$ in a suitable sense, namely as a relatively form-bounded perturbation. In more detail, $H_0$ is associated (in the sense of the representation theorem \cite[Thm.~VI.2.1]{Kato-1966}, \cf~Thm.~\ref{thm.kato} in Appendix)
with the quadratic form 
\begin{equation}
h_0[\psi]:=\|\nabla \psi\|^2, \qquad \Dom(h_0):=W^{1,2}(\R^4),
\end{equation}
while $V(x)=-|x|^{-2}$ is associated with
\begin{equation}
	v[\psi]  := \int_{\R^4} V |\psi|^2, \quad \quad \Dom(v):=\left \{ \psi \in L^2(\R^4):  V |\psi|^2 \in L^1(\R^4) \right\}.
\end{equation}
The classical Hardy inequality \cite{Hardy-1920-6} reads (for $d\geq 3$)
\begin{equation}\label{Hardy.ineq}
\forall \psi \in W^{1,2}(\R^d),  \qquad \int_{\R^d} |\nabla \psi(x)|^2 \dd x \geq 
\frac{(d-2)^2}{4} \int_{\R^d} \frac{|\psi(x)|^2}{|x|^2} \dd x.
\end{equation}
Using the notation for quadratic forms, we get in our case, \ie~$d=4$, that
\begin{equation}\label{rel.bound.1}
\forall \psi \in \Dom(h_0), \qquad | v[\psi]| \leq h_0[\psi]. 
\end{equation}
This inequality means that the form $v$ is relatively bounded w.r.t. $h_0$ with the bound $1$ 
and it is the cornerstone of the definition of the Hamiltonian for $Z \in [0,\Zc)$. 
For $Zv$ is then relatively bounded w.r.t. $h_0$ with bound $Z<1$, thus the KLMN theorem is applicable, \cf~\cite[Thm.~X.17]{Reed2} or Thm.~\ref{thm.KLMN}. It yields that the quadratic form
\begin{equation}
h := h_0 + Z v, \qquad 
\Dom(h) := \Dom(h_0) = W^{1,2}(\R^4), \quad Z \in [0,1),
\end{equation}
is symmetric, closed, and bounded from below, hence associated with unique self-adjoint, bounded from below operator $H$ that represents our Hamiltonian.
In fact, having inequality \eqref{rel.bound.1}, we conclude that spectrum of $H$ is  non-negative.

Since the potential decays at infinity, using a standard argument based on construction of suitable singular sequences 
(regularized plane waves), it follows that $\sigma(H)=\spes(H)=[0,+\infty)$; for more details see \eg~the proof of Thm.~\ref{thm.ess}, dealing with an analogous problem. 
In other words, the spectrum of $H$ is the same as that of $H_0$,  
\ie~consisting of a branch of the continuous one without any negative eigenvalues.

Formally, we can sum up the facts above into the following statement:
\begin{theorem}\label{thm.stab.inf}
Let $V(x)=-|x|^{-2}$ and let $Z \in [0,\Zc)$. Then $H:= -\Delta +Z V$, defined as a sum of forms, is self-adjoint and non-negative.
Moreover, $\sigma(H)=\spes(H)=[0,+\infty)$.
\end{theorem}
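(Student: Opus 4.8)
The statement assembles the ingredients recalled above, so the plan is threefold: (1) use the $d=4$ Hardy inequality to define $H$ through the KLMN theorem; (2) deduce non-negativity from the same inequality; (3) identify the spectrum by a singular (Weyl) sequence of regularized plane waves escaping to infinity.

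For (1)--(2): inequality~\eqref{rel.bound.1}, which is exactly~\eqref{Hardy.ineq} for $d=4$, says that $v$ is form-bounded relative to $h_0$ with relative bound $1$; hence for $Z\in[0,\Zc)$ the perturbation $Zv$ is $h_0$-form-bounded with relative bound $Z<1$, and Thm.~\ref{thm.KLMN} applies. It yields that $h=h_0+Zv$ with $\Dom(h)=W^{1,2}(\R^4)$ is symmetric, closed and bounded from below, and that it is the form of a unique self-adjoint operator $H$ bounded from below; this $H$ is the Hamiltonian, which proves the first assertion. Non-negativity follows at once: for every $\psi\in\Dom(h)$,
\[
 h[\psi]=h_0[\psi]+Z\,v[\psi]\ \ge\ h_0[\psi]-Z\,|v[\psi]|\ \ge\ (1-Z)\,h_0[\psi]\ \ge\ 0 ,
\]
so $\inf\sigma(H)=\inf\{\,h[\psi]:\psi\in\Dom(h),\ \|\psi\|=1\,\}\ge 0$, \ie~$\sigma(H)\subseteq[0,+\infty)$ and $H$ has no negative eigenvalues.

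It remains to show $[0,+\infty)\subseteq\spes(H)$; combined with $\spes(H)\subseteq\sigma(H)\subseteq[0,+\infty)$ from the previous step this forces $\sigma(H)=\spes(H)=[0,+\infty)$. Fix $\lambda\ge 0$, choose $k\in\R^4$ with $|k|^2=\lambda$, a bump function $\chi\in C_c^\infty(B_1(0))$ with $\|\chi\|=1$, and centres $y_n\in\R^4$ with $|y_n|\to\infty$, and set
\[
 \psi_n(x):=n^{-2}\,\chi\!\bigl(\tfrac{x}{n}-y_n\bigr)\,e^{\ii\,k\cdot x} .
\]
A change of variables gives $\|\psi_n\|=1$; since $\supp\psi_n\subseteq\{\,|x|\ge n(|y_n|-1)\,\}$, for large $n$ the supports avoid the origin --- so $\psi_n\in\Dom(H)$ with $H\psi_n=-\Delta\psi_n+ZV\psi_n$ --- and run off to infinity, whence $\psi_n\rightharpoonup 0$. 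Differentiating, $(-\Delta-\lambda)\psi_n$ collects only terms in which at least one derivative falls on $\chi$, each such derivative producing a factor $n^{-1}$, so $\|(-\Delta-\lambda)\psi_n\|\to 0$; and $\|ZV\psi_n\|\le |Z|\,\bigl(n(|y_n|-1)\bigr)^{-2}\to 0$. Thus $(\psi_n)$ is a singular Weyl sequence for $H$ at $\lambda$, so $\lambda\in\spes(H)$.

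The only step that is not purely formal is the last one: one must check that the regularized plane waves genuinely belong to $\Dom(H)$ (which is guaranteed once their supports avoid the singularity, and the escape to infinity arranges this automatically) and that, for each fixed $\lambda\ge 0$, the two estimates above indeed tend to zero. Both are routine; a closely related but slightly more involved version of this construction is carried out in the proof of Thm.~\ref{thm.ess}. Steps~(1)--(2) are immediate from the Hardy inequality and the quoted abstract theorems.
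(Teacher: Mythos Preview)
Your proposal is correct and follows essentially the same route as the paper: Hardy inequality plus KLMN for self-adjointness and non-negativity, then regularized plane waves escaping to infinity (a Weyl singular sequence) to capture $[0,+\infty)\subseteq\spes(H)$, exactly as the paper sketches and refers to the proof of Thm.~\ref{thm.ess} for details. The only cosmetic difference is that you spell out the trial functions explicitly in $\R^4$, whereas the paper simply points to the analogous construction in the compactified setting.
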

In the border case $Z=1$, we can start with a symmetric Hamiltonian $\Hm$, \cf~\eqref{H.min}, that is non-negative due to Hardy inequality. Hence the Friedrichs extension, \cf~\cite[Thm.~X.23]{Reed2} or \cite[Thm.~4.6.11]{blank2008hilbert} for instance, is non-negative and thus represent a suitable Hamiltonian. 
The case $Z<0$ (repulsive interaction) is even simpler, the spectral result is the same as in Thm.\ref{thm.stab.inf}.

\subsection{Instability for strong charges} \label{instability.uncompact}
The instability of the Hamiltonian for $Z>\Zc$ can be explained with the help of facts related to the Hardy inequality. 
Before giving the details, we remark that there is a complication 
already in the definition of a self-adjoint Hamiltonian since, unlike in the $Z<\Zc$ case, the forms approach and KLMN theorem are not applicable. It is well-known, \cf~\cite[Sec.~X.2]{Reed2},
\cite{Krall-1982-45}, that we have an infinite number of self-adjoint operators that extend a natural minimal (only symmetric) operator
\begin{equation}\label{H.min}
\Hm
:=-\Delta + Z V, \qquad \Dom(\Hm) := C_0^{\infty}(\R^4 \setminus\{0\}).
\end{equation}
Therefore, in order to establish the instability, we have to show that any of these
possible Hamiltonians has spectrum unbounded from below. We provide a more abstract argument first and 
then we briefly recall the explanation due to Krall, \cf~\cite{Krall-1982-45}, suitable for extensions allowing separation of variables (in spherical coordinates).
\begin{theorem}\label{thm.uc.inst}
Let $Z> \Zc$ and let $\Hsa$ be any self-adjoint extension of operator $\Hm$ defined in \eqref{H.min}.
Then the spectrum of $\Hsa$ is unbounded from below (and from above).
\end{theorem}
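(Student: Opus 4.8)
The plan is to reduce to the radial part and exhibit, for each self-adjoint extension $\Hsa$, a sequence of normalized trial functions whose quadratic-form values tend to $-\infty$. The key observation is that for $Z > 1$ the Hardy inequality \eqref{Hardy.ineq} in $\R^4$ fails to bound $Zv$ below by $h_0$, and in fact the deficit is governed by a one-dimensional problem on the half-line that admits oscillatory (complex-power) solutions $r^{-1+\ii\mu}$. First I would pass to the unitarily equivalent operator on $L^2((0,\infty), r^3\,\dd r)$ obtained by restricting to spherically symmetric functions; writing $\psi(x) = r^{-3/2} u(r)$ transforms the radial action into $\int_0^\infty \bigl(|u'|^2 - (Z - 1/4 + \text{(angular corrections)}\,) r^{-2} |u|^2\bigr)\,\dd r$ plus boundary/lower-order terms, so that for $Z>1$ the effective coefficient of $r^{-2}$ exceeds the critical value $1/4$ of the one-dimensional Hardy inequality. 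This is precisely the classical limit-circle / oscillatory regime at the origin.

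Next I would build the trial functions. Fix a smooth cutoff supported in an annulus $\{a < r < b\}$ that is identically $1$ on $\{2a < r < b/2\}$, and on the plateau take $u(r) = r^{-1/2}\cos(\mu \log r)$ with $\mu = \sqrt{Z - 1} > 0$ (the exact constant adjusted for the angular/lower-order contributions, all of which are lower order). A direct computation shows that on the plateau the integrand of the quadratic form is $\approx -\mu^2 r^{-2}|u|^2 \sim -\mu^2 r^{-3}$, so the bulk contribution behaves like $-\mu^2 \log(b/a)$ while the $L^2$-norm of the trial function is $\sim \log(b/a)$; letting $b/a \to \infty$ (keeping the support shrinking toward $0$, or stretching toward $0$ and $\infty$ alternately) forces the Rayleigh quotient to $-\infty$. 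Because these trial functions can be taken in $C_0^\infty(\R^4 \setminus \{0\})$, they lie in the form domain of \emph{every} self-adjoint extension $\Hsa$ of $\Hm$, and the quadratic form of $\Hsa$ agrees with the minimal form on them; hence by the variational (min-max) characterization $\inf \sigma(\Hsa) = -\infty$. To get unboundedness from above as well, I would note the potential $ZV$ is bounded on any compact set away from the origin, so $\Hsa$ is a relatively compact perturbation there of $-\Delta$, whose spectrum reaches $+\infty$; more directly, oscillatory trial functions with high frequency in the \emph{angular} or radial variable, supported away from the origin, have arbitrarily large positive form values, and these again lie in $C_0^\infty(\R^4\setminus\{0\})$.

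The main obstacle I expect is bookkeeping the lower-order terms cleanly: after the substitution $\psi = r^{-3/2}u$ one gets, besides the leading $-\mu^2 r^{-2}$, contributions of order $r^{-1}$ (from the genuine part of the potential that is absorbed, if one works with $\Vc$ rather than the pure $-|x|^{-2}$) and boundary terms from the cutoff derivative $\chi'$, supported in the thin regions $\{a<r<2a\}$ and $\{b/2<r<b\}$. One must check these are $o(\log(b/a))$ — the cutoff regions contribute $O(1)$ each, and the $r^{-1}$-terms contribute $O(\int_a^b r^{-1}\cdot r^{-1}\cdot r^3\,\dd r/(\log(b/a)))$-type quantities that are dominated — so they do not spoil the divergence of the Rayleigh quotient. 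A secondary subtlety is making the argument uniform over all self-adjoint extensions; this is handled once one observes the trial functions are in $C_0^\infty(\R^4\setminus\{0\}) = \Dom(\Hm)$, on which all extensions coincide, so no extension-specific analysis is actually needed.
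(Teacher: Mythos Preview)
Your overall framework is right: construct trial functions in $C_0^\infty(\R^4\setminus\{0\}) = \Dom(\Hm)$, on which every self-adjoint extension agrees with the minimal operator, and drive the Rayleigh quotient to $-\infty$. The paper follows exactly this strategy, and your observation that no extension-specific analysis is needed is the key point. The unboundedness-from-above argument is also fine.

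The gap is in your specific trial functions and their asymptotics. With $u(r)=r^{-1/2}\cos(\mu\log r)$ on $(a,b)$ and $\mu^2=Z-1=\gamma-\tfrac14$, a direct computation gives
\[
|u'|^2-\gamma r^{-2}|u|^2 \;=\; r^{-3}\Bigl[-\mu^2\cos(2\mu\log r)+\tfrac{\mu}{2}\sin(2\mu\log r)\Bigr],
\]
which is oscillatory with no definite sign; after $s=\log r$ this is $e^{-2s}$ times a bounded oscillation, so the form integral oscillates with amplitude $\sim a^{-2}$ rather than behaving like $-\mu^2\log(b/a)$. More fatally, even granting your stated asymptotics (form $\sim-\mu^2\log(b/a)$, norm $\sim\log(b/a)$), their ratio would tend to the \emph{constant} $-\mu^2$, not to $-\infty$. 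The functions $r^{1/2\pm \ii\mu}$ are zero-energy solutions of the radial operator, and cutting them off yields Rayleigh quotients near $0$, not diverging.

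The paper uses instead the near-optimizers of the Hardy inequality, $\psi_n(x)=|x|^{-1+1/n}\eta(|x|)$ with a fixed cutoff $\eta$. For these, $\int|\psi_n|^2/|x|^2\geq C_4\,n\to\infty$ while both $\|\psi_n\|^2$ and $\bigl|\int|\nabla\psi_n|^2-\int|\psi_n|^2/|x|^2\bigr|$ stay bounded; hence the form value is at most $C-(Z-1)C_4\,n\to-\infty$ against bounded norm. A density step replaces $\psi_n$ by $\varphi_n\in C_0^\infty(\R^4\setminus\{0\})$. An alternative closer to your intent, and perfectly valid, is a scaling argument: first exhibit \emph{one} $\chi\in C_0^\infty(\R^4\setminus\{0\})$ with $\|\nabla\chi\|^2-Z\int|x|^{-2}|\chi|^2<0$ (possible since the Hardy constant $1$ is sharp and $Z>1$), then dilate $\chi_\lambda(x):=\chi(\lambda x)$; scale invariance of $-\Delta-Z|x|^{-2}$ makes the Rayleigh quotient equal to $\lambda^2$ times a fixed negative number, hence $\to-\infty$ as $\lambda\to\infty$, with supports shrinking to the origin.
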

\begin{proof}
We use the fact that there exists an optimizing sequence of functions $\{\psi_n\} \subset W^{1,2}(\R^4)$ for the Hardy inequality 
\eqref{Hardy.ineq}. 
Let $0< \delta < 1/2$, and let $\eta \in \C_0^{\infty}(\R)$ be such that $0\leq \eta \leq  1$, $\eta\equiv 1$ for $(-\delta,\delta)$ and $\supp \eta \subset (-3 \delta/2, 3 \delta/2)$. We define a sequence of functions  
\begin{equation}\label{trial.compact}
\psi_n(x):=|x|^{-1 + 1/n} \eta(|x|), 
\end{equation}
belonging to $W^{1,2}(\R^4)$. Since $C_0^{\infty}(\R^4 \setminus \{0\})$ is dense in $W^{1,2}(\R^4)$ (in $W^{1,2}$ norm), \cf~\cite[Cor.~VIII.6.4]{EE}, we can construct, using a standard diagonal scheme
argument, a sequence of functions $\{\varphi_n\} \subset C_0^{\infty}(B_{2\delta}(0) \setminus \{0\})$ such that 
\begin{equation}\label{ineqn4}
\begin{aligned}
\| \psi_n-\varphi_n\|_{W^{1,2}}^2<\frac1{n}.
\end{aligned}
\end{equation}
Moreover, from the Hardy inequality~\eqref{Hardy.ineq}, we get:
\begin{equation}\label{ineqn5}
\begin{aligned}
\int_{\R^4} \frac{|\psi_n-\varphi_n|^2}{|x|^2} \leq
\int_{\R^4} \left| \nabla (\psi_n-\varphi_n) \right|^2 \dd x \leq
\| \psi_n-\varphi_n\|_{W^{1,2}}^2<\frac1{n}.
\end{aligned}
\end{equation}

The concluding part of the proof consists of a couple of inequalities. All constants $C_i$ are positive real numbers. Firstly, it is easy to verify that the normalization of $\psi_n$ is bounded as follows:
\begin{equation}\label{ineqn1}
0<C_1\leq \|\psi_n\|^2 \leq C_2 < +\infty.
\end{equation}
Further, for all $n\in\mathrm{N}$ we have
\begin{equation}\label{ineqn2}
\begin{aligned}
&\left|\int_{\R^4} |\nabla \psi_n|^2 - \int_{\R^4} \frac{|\psi_n|^2}{|x|^2}\right|
=C_\Theta\left|\int_0^\infty \left|\partial_\rho \left( \rho^{-1+1/n}\eta(\rho)\right)\right|^2 \rho^3 \dd \rho 
-\int_0^\infty  \left|\rho^{-1+1/n}\eta(\rho)\right|^2 \rho \dd \rho \right| 
\\
&=C_\Theta\left| \frac{1-2n}{n^2} 
\int_0^\infty \rho^{-1+2/n} \eta^2(\rho)\dd \rho 
+
\int_0^\infty \rho^{1+2/n} \eta'(\rho)^2 \dd \rho
+
\frac{2(1-n)}{n}\int_0^\infty \rho^{2/n}\eta(\rho)\eta'(\rho)\dd \rho  \right| 
\\
& \leq C_\Theta\left(
\frac{2}{n} \int_0^{2\delta} \rho^{-1+2/n} \dd \rho
+ \tilde{C}_1 \int_0^{2\delta} \rho^1 \dd \rho
+ \tilde{C}_2 \int_0^{2\delta} \rho^0 \dd \rho
\right)
\leq C_3 < +\infty,
\end{aligned}
\end{equation}
where $C_\Theta$ comes from integrating over angular coordinates. 

Finally, taking, as a lower bound, the integral over $(0,\delta)$ in the radial part, we easily obtain
\begin{equation}\label{ineqn3}
\int \frac{|\psi_n|^2}{|x|^2} \geq n C_4.
\end{equation}

We are now ready to analyze the infimum of the spectrum of $\Hsa$.
Since $\varphi_n \in \Dom(\Hm) \subset \Dom(\Hsa)$, the integration by parts yields
\begin{equation}
\frac{\langle \varphi_n, \Hsa \varphi_n \rangle}{\|\varphi_n\|^2} = 
\frac{\| \nabla\varphi_n\|^2  + Z \langle \varphi_n, V \varphi_n \rangle}{\|\varphi_n\|^2}.
\end{equation}
In the next steps, we approximate $\varphi_n$ by $\psi_n$. At first, using $|a-b|\geq|a|-|b|$, Young inequality (with setting $\varepsilon = 1/2$  eventually), \eqref{ineqn4} and  \eqref{ineqn1}, we obtain
\begin{equation}\label{ineqn6}
\begin{aligned}
\|\varphi_n\|^2 &= \|\varphi_n-\psi_n+\psi_n\|^2 
\geq
(1-\varepsilon)\|\psi_n\|^2+\left(1-\frac 1 \varepsilon \right) \| \varphi_n -\psi_n\|^2 
\geq
\frac1{2}\|\psi_n\|^2- \| \varphi_n -\psi_n\|_{W^{1,2}}^2 
\\ & \geq \frac{C_1}2 - \frac 1 n
.
\end{aligned}
\end{equation}
Next, using again triangle, Young inequalities,  $|a-b|\geq|a|-|b|$, \eqref{ineqn4}, \eqref{ineqn5} and \eqref{ineqn2}, we get
\begin{equation}\label{ineqn7}
\begin{aligned}
\langle \varphi_n, \Hsa \varphi_n \rangle
& =
\|\nabla\varphi_n -\nabla\psi_n+\nabla\psi_n\|^2 - Z \int_{\R^4}\frac{|\varphi_n-\psi_n+\psi_n|^2}{|x|^2}
\\
&\leq \frac{n+1}{n} \|\nabla \psi_n\|^2 + (n+1)\|\nabla\varphi_n -\nabla\psi_n\|^2 
- Z \frac{n-1}{n}   \int_{\R^4} \frac{ |\psi_n|^2 - n|\varphi_n-\psi_n|^2 } {|x|^2} 
\\
&\leq \frac{n+1}{n}  \left( \|\nabla \psi_n\|^2 - \int_{\R^4} \frac{|\psi_n|^2}{|x|^2}\right) 
- \left( Z-1 - \frac{Z +1}{n} \right) \int_{\R^4} \frac{|\psi_n|^2}{|x|^2}
\\
& \quad + (n+1)\| \psi_n-\varphi_n\|^2_{W^{1,2}} + Z(n-1)\int_{\R^4}\frac{|\varphi_n-\psi_n|^2}{|x|^2} 
\\
&\leq \frac{n+1}{n} C_3 - n\, C_4 \left( Z-1 - \frac{Z +1}{n}\right) + \frac{n+1}{n} +Z\frac{n-1}{n}.
\end{aligned}
\end{equation}
In the last step, we combine \eqref{ineqn6} and \eqref{ineqn7} and we obtain 
\begin{equation}
\frac{\langle \varphi_n, \Hsa \varphi_n \rangle}{\|\varphi_n\|^2} \to -\infty \quad \mbox{as } \quad n \to \infty.
\end{equation}
Hence, the spectrum of any self-adjoint extension $\Hsa$ of $\Hm$ is unbounded from below.
%}
\end{proof}

\subsubsection{Spectrum of spherically symmetric extensions}
The symmetric operator $\Hm$ can be rewritten in spherical coordinates $(\rho,\phi_1,\phi_2,\phi_3)$
as

\begin{equation}
\Hm = -\pardertwo[\rho]-\frac3{\rho}\parder[\rho]+\frac{\mathcal{L}^2(3)}{\rho^2} - \frac{Z}{\rho^2}  ,
\end{equation}
where $\mathcal{L}^2(3)$
is the square of the angular momentum operator on a 3-sphere, see \eg~\cite{Bures-2007} for details. 
Due to spherical symmetry of the potential, we can separate variables by making the Ansatz $\psi(x)=R(\rho)Y^{(3)}(\phi_1,\phi_2,\phi_3)$, 
and thus obtain the radial operator acting in $L^2((0,\infty),\rho^3 \dd \rho)$:
\begin{equation}
-\frac{\dd^2 }{\dd \rho^2} - \frac{3}{\rho} \frac{\dd }{\dd \rho} + \frac{l(l+2)-Z}{\rho^2}.
\end{equation}
The usual transformation $R(\rho) \mapsto \rho^{-3/2} \tilde{R}(\rho)$ yields the final radial operator 
in $L^2((0,\infty),\dd \rho)$:
\begin{equation}
\Hm^{\rm rad}:= -\frac{\dd^2 }{\dd \rho^2} -  \frac{\gamma}{\rho^2},
\end{equation}
where $\gamma = Z-3/4 - l(l+2)$, which is exactly the form of Hamiltonian studied in \cite{PhysRev.80.797} and particularly in \cite{Krall-1982-45} for $\gamma > 1/4$, \ie~$Z>1$ if $l=0$.  
A one parameter family of self-adjoint operators $H_\alpha$ (possible Hamiltonians) having the same action as $\Hm^{\rm rad}$ when restricted to $C_0^{\infty}(\R_+)$ functions is found there; the parameter $\alpha$ enters the generalized boundary conditions at $0$ that describe the domain of $H_{\alpha}$. The spectrum of any $H_{\alpha}$ contains continuous branch $[0,+\infty)$ and negative eigenvalues having accumulation points at $0$ and at $-\infty$, moreover, the algebraic eigenvalue equation, where $\alpha$ enters as a parameter, is known, \cf~\cite{Krall-1982-45}.

\section{Compactified extra dimension} \label{sec.general.ed}
We consider a configuration space with the ordinary three extended spatial dimensions $(x_1,x_2,x_3)\in\R^3$ and one extra compactified dimension $x_4$ with the topology of a circle with radius~$R$, \ie~$\R^3 \times S^1$.
Parameterizing $S^1$, we obtain the configuration space $\Omega = \cspace$ and the Hamiltonian 
$H = -\Delta + Z V$ in $L^2(\Omega)$ with periodic boundary conditions at $(x_1,x_2,x_3,\pm \pi R)$ in the domain.
We address the question of stability for Hamiltonians with the potential:
\begin{eqnarray}\label{rozklad.V}
V(x) =- \frac1{|x|^2}+ W(x), \quad \mbox{with } W \in L^{\infty}(\Omega).
\end{eqnarray}
The motivation for this choice of configuration space will become clear in Section~\ref{sec.comp} where the particular form of $W$ will be derived.

\subsection{Stability and instability}
\label{stability.compact}

As in Section~\ref{stability.uncompact}, we will treat the problem by introducing quadratic forms corresponding to our operators.
The kinetic part $H_0:=-\Delta$ is associated with the form 
\begin{equation}\label{h0.def.comp}
h_0[\psi]:=\|\nabla \psi\|^2, \qquad  \Dom(h_0):=\{\psi \in W^{1,2}(\Omega): \psi(x_1,x_2,x_3,-\pi R)=\psi(x_1,x_2,x_3,\pi R)\}
\end{equation}
while $V$ having the form \eqref{rozklad.V} is associated with
\begin{equation}
v[\psi]:=\int_{\Omega} V |\psi|^2, \quad \quad \Dom(v):=\left \{ \psi \in L^2(\Omega): V |\psi|^2 \in L^1(\Omega) \right \}.
\end{equation}
\begin{theorem}\label{thm.stab.comp}
Let $V$ be given by~\eqref{rozklad.V}. Then, for all $\psi \in W^{1,2}(\Omega)$ and all $\varepsilon >0$,
\begin{equation}\label{c.I}
\int_\Omega |V||\psi|^2  \leq (1+\varepsilon ) \int_\Omega |\nabla \psi|^2 + C(\varepsilon) \|\psi\|^2,
\end{equation}
where $C(\varepsilon) >0$ is independent of $\psi$.
Hence, if $Z \in [0,\Zc)$, then $H:= -\Delta +Z V$, defined as a sum of forms, is self-adjoint and bounded from below.
\end{theorem}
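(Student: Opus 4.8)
The plan is to establish the form inequality~\eqref{c.I}; the self-adjointness and semi-boundedness then follow at once from the KLMN theorem. Decompose $V=-|x|^{-2}+W$ as in~\eqref{rozklad.V}. The bounded part is trivial, $\int_\Omega|W|\,|\psi|^2\le\|W\|_{L^\infty}\|\psi\|^2$, so it only feeds the $C(\varepsilon)\|\psi\|^2$ term. The whole point is the inverse-square singularity, which is located at the single point $0\in\Omega$; since $\Omega=\cspace$ contains a genuine four-dimensional ball around $0$, the geometry near the singularity is exactly that of $\R^4$, and I would exploit the sharp $\R^4$ Hardy inequality~\eqref{Hardy.ineq} (constant $(d-2)^2/4=1$ for $d=4$) after localizing to such a ball. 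That the sharp constant equals $1$ exactly is the reason why the bound in~\eqref{c.I} is $(1+\varepsilon)$ rather than something strictly below $1$, and why the factor $Z<1$ will be needed.

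Fix $\rho$ with $0<2\rho<\pi R$, so that $B_{2\rho}(0)\subset\Omega$, and pick $\chi\in C_0^{\infty}(\R^4)$ with $0\le\chi\le1$, $\chi\equiv1$ on $B_{\rho}(0)$ and $\supp\chi\subset B_{2\rho}(0)$. Write
\[
\int_\Omega\frac{|\psi|^2}{|x|^2}=\int_\Omega\frac{\chi^2|\psi|^2}{|x|^2}+\int_\Omega\frac{(1-\chi^2)|\psi|^2}{|x|^2}.
\]
In the second integral $1-\chi^2$ vanishes on $B_{\rho}(0)$, hence $|x|^{-2}\le\rho^{-2}$ on its support and the term is at most $\rho^{-2}\|\psi\|^2$. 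For the first integral, $\chi\psi$ (extended by zero) lies in $W^{1,2}(\R^4)$ because $\supp\chi$ is a compact subset of the interior of $\Omega$ — in particular the periodic boundary conditions appearing in~\eqref{h0.def.comp} play no role — so~\eqref{Hardy.ineq} gives $\int_\Omega\chi^2|\psi|^2/|x|^2\le\int_{\R^4}|\nabla(\chi\psi)|^2$. Now $\nabla(\chi\psi)=\chi\nabla\psi+\psi\nabla\chi$, and Young's inequality together with $\chi^2\le1$ yields, for any $\varepsilon>0$,
\[
\int_{\R^4}|\nabla(\chi\psi)|^2\le(1+\varepsilon)\|\nabla\psi\|^2+\Bigl(1+\tfrac1\varepsilon\Bigr)\|\nabla\chi\|_{L^\infty}^2\,\|\psi\|^2 .
\]
Since $\nabla\chi$ is supported away from $0$, the error it produces carries no singularity and is a bona fide lower-order term. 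Adding the three contributions (and the $W$-term) gives~\eqref{c.I} with $C(\varepsilon)=(1+1/\varepsilon)\|\nabla\chi\|_{L^\infty}^2+\rho^{-2}+\|W\|_{L^\infty}$, and the coefficient of $\|\nabla\psi\|^2$ is indeed $1+\varepsilon$ for every $\varepsilon>0$.

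For the concluding assertion, take $Z\in[0,\Zc)$ and choose $\varepsilon>0$ with $Z(1+\varepsilon)<1$. Then~\eqref{c.I} shows $|Zv[\psi]|\le Z\int_\Omega|V||\psi|^2\le Z(1+\varepsilon)h_0[\psi]+ZC(\varepsilon)\|\psi\|^2$ for all $\psi\in\Dom(h_0)$, \ie~$Zv$ is relatively form-bounded with respect to $h_0$ with relative bound $Z(1+\varepsilon)<1$. By the KLMN theorem (Thm.~\ref{thm.KLMN}) the form $h:=h_0+Zv$ on $\Dom(h_0)$ is symmetric, closed and bounded from below, hence associated with a unique self-adjoint, bounded-from-below operator $H$, which is the claimed Hamiltonian.

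The only genuine difficulty, modest as it is, is the bookkeeping that keeps the coefficient of $\|\nabla\psi\|^2$ arbitrarily close to $1$: this forces the cross term from differentiating $\chi$ and the contribution of the region $\{\chi\neq1\}$ to be absorbed entirely into $C(\varepsilon)\|\psi\|^2$, which is legitimate precisely because both are supported away from the singularity. A secondary point worth checking is the claim that the zero-extension of $\chi\psi$ belongs to $W^{1,2}(\R^4)$, which hinges on $\supp\chi$ being compactly contained in $\Omega$.
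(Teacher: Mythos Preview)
Your proof is correct and follows essentially the same route as the paper: localize near the singularity with a smooth cutoff, apply the sharp $\R^4$ Hardy inequality to $\chi\psi$, and absorb the commutator $|\nabla\chi|^2|\psi|^2$ and the far-field piece $(1-\chi^2)|x|^{-2}$ into the lower-order term via Young's inequality, then invoke KLMN. Your version is in fact slightly more explicit about the constant $C(\varepsilon)$ and about why the periodic boundary conditions are irrelevant for the localized function, but the argument is the same.
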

\begin{proof}
Let $\xi \in C_0^{\infty}(\Omega)$, where $0 \leq \xi \leq 1$ and $\xi \equiv 1$ for $x\in B_{R-\delta}(0)$, where $\delta>0$, \ie~in a ball with radius $R-\delta$ around the singularity.
Then 
\begin{equation}
\begin{aligned}
\int_\Omega |V||\psi|^2 
& \leq    \int_\Omega \frac{|\xi \psi|^2}{|x|^2} + \int_\Omega \frac{(1-\xi^2)|\psi|^2}{|x|^2}  + \int_\Omega |W| |\psi|^2
\leq \int_\Omega \frac{|\xi \psi|^2}{|x|^2} + C_1 \|\psi\|^2
\leq  \int_\Omega |\nabla(\xi \psi)|^2 + C_1 \|\psi\|^2 
\\& = \int_\Omega |(\nabla\xi) \psi + \xi (\nabla \psi)|^2 + C_1 \|\psi\|^2  
\leq  (1+\varepsilon ) \int_\Omega |\xi|^2|\nabla \psi|^2 
+ \left(1+\frac1{\varepsilon}\right) \int_\Omega |\nabla \xi|^2|\psi|^2 
+ C_1\|\psi\|^2
\\
&\leq (1+\varepsilon ) \int_\Omega |\nabla \psi|^2 + C(\varepsilon) \|\psi\|^2,
\end{aligned}
\end{equation}
with arbitrary positive~$\varepsilon$. 
The last inequality on the first line is due to the Hardy inequality~\eqref{Hardy.ineq} and the fact that $\xi\psi \in W^{1,2}(\R^4)$.
The inequality on the second line is due to the triangle and the Young inequality.
Now, multiplying both sides of~\eqref{c.I} by $Z$, we get using the notation for quadratic forms:
\begin{equation}
\forall \psi \in \Dom(h_0), \quad \quad |v[\psi]| \leq  Z(1+\varepsilon ) h_0[\psi]  + ZC(\varepsilon) \|\psi\|^2.
\end{equation}
If $Z \in [0,\Zc)$, then $Zv$ 
is relatively bounded w.r.t. $h_0$ with bound $Z \in [0,\Zc)$, because $\varepsilon$ is an arbitrary positive number.
Hence, we can apply the KLMN theorem yielding the claim.
\end{proof}
Let us note that the conclusion on the stability is valid also for a finite number of singularities of the type $1/|x|^2$, \ie~$\sum_{k=0}^{N}=1/|x-c_k|^2$, where $c_k\in \Omega$. The function $W$ may not be bounded, for instance, a relative form-boudedness with respect to~$h_0$ with bound 0 is sufficient. Also, the stability result remains unchanged if we replace the periodic boundary conditions by Dirichlet, Neumann or Robin ones. Moreover, the proof can be extended to any sufficiently regular domain $\Omega$ in a standard way.

For $Z>\Zc$, the system is not stable, as in the uncompactified case, \cf~Thm.~\ref{thm.uc.inst}. Again, we start with a symmetric minimal operator
\begin{equation}\label{Hmin.comp}
\Hm :=-\Delta + Z V, \qquad \Dom(\Hm) := \{ \psi \in C_0^{\infty}(\Omega \setminus \{0\})\},
\end{equation}
and deal with its self-adjoint extensions.
\begin{theorem}\label{thm.c.inst}
Let $Z>\Zc$ and let $\Hsa$ be any self-adjoint extension of a symmetric operator $\Hm$ defined in \eqref{Hmin.comp}.
Then the spectrum of $\Hsa$ is unbounded from below (and from above).
\end{theorem}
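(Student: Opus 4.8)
The plan is to reduce the claim to the uncompactified case of Theorem~\ref{thm.uc.inst} by observing that the obstruction to semiboundedness from below is purely local at the singularity $0$, whereas the compactification of the extra dimension and the bounded remainder $W$ in~\eqref{rozklad.V} affect the operator only far from $0$.

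First I would fix $\delta>0$ small enough that, after the parametrisation of $S^1$, the ball $B_{2\delta}(0)\subset\R^4$ is contained in $\Omega$ and stays away from the boundary hyperplanes $\{x_4=\pm\pi R\}$. With this choice the trial functions $\psi_n$ from~\eqref{trial.compact} and the approximants $\varphi_n\in C_0^\infty(B_{2\delta}(0)\setminus\{0\})$ constructed in the proof of Theorem~\ref{thm.uc.inst} are again well defined, belong to $W^{1,2}(\Omega)$ and to $\Dom(\Hm)=C_0^\infty(\Omega\setminus\{0\})$ respectively (the periodic boundary conditions being satisfied trivially), and for any self-adjoint extension $\Hsa$ one has $\varphi_n\in\Dom(\Hm)\subset\Dom(\Hsa)$. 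Since $\supp\varphi_n\subset B_{2\delta}(0)$, every estimate in~\eqref{ineqn4}--\eqref{ineqn7} is an integral over that ball and therefore carries over verbatim; the only new term in the Rayleigh quotient comes from $W$ and is controlled by $|Z\langle\varphi_n,W\varphi_n\rangle|\le Z\|W\|_{L^\infty(\Omega)}\|\varphi_n\|^2$, which after division by $\|\varphi_n\|^2$ merely adds the fixed constant $Z\|W\|_{L^\infty(\Omega)}$ to the upper bound in~\eqref{ineqn7}. Hence the argument of Theorem~\ref{thm.uc.inst} still yields $\langle\varphi_n,\Hsa\varphi_n\rangle/\|\varphi_n\|^2\to-\infty$, so $\sigma(\Hsa)$ is unbounded from below.

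For unboundedness from above I would use a complementary family of trial functions supported away from the singularity. Fix a real $\chi\in C_0^\infty(\Omega\setminus\{0\})$ with $\chi\not\equiv0$ and set $\varphi_k(x):=e^{\ii k\cdot x}\chi(x)$, $k\in\R^4$. Then $\varphi_k\in\Dom(\Hm)\subset\Dom(\Hsa)$, $\|\varphi_k\|^2=\|\chi\|^2$, $\|\nabla\varphi_k\|^2=|k|^2\|\chi\|^2+\|\nabla\chi\|^2$, and $\langle\varphi_k,V\varphi_k\rangle=\int_\Omega V|\chi|^2$ is finite because $V$ is bounded on $\supp\chi$; consequently $\langle\varphi_k,\Hsa\varphi_k\rangle/\|\varphi_k\|^2\to+\infty$ as $|k|\to\infty$. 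The same observation applies to Theorem~\ref{thm.uc.inst}.

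I do not expect a genuine obstacle here: the proof is in essence a localisation argument. The only points that require care are choosing $\delta$ so that $B_{2\delta}(0)$ embeds in $\Omega$ away from the boundary, and checking that the $L^\infty$ perturbation $W$ contributes only a bounded shift to the Rayleigh quotient and therefore cannot spoil the divergences established in the uncompactified case.
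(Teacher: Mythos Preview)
Your proposal is correct and follows essentially the same localisation idea as the paper: the trial functions $\varphi_n$ from the proof of Theorem~\ref{thm.uc.inst} were built with $\supp\varphi_n\subset B_{2\delta}(0)\setminus\{0\}$ for arbitrarily small $\delta$, so they transplant directly into $\Omega$. You are in fact more careful than the paper, which does not spell out the (harmless) contribution of the bounded remainder $W$ nor the argument for unboundedness from above; both of your additions are correct and complete the picture.
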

\begin{proof}
The sequence $\varphi_n$ of trial functions from the proof of Thm.~\ref{thm.uc.inst} was intentionally constructed in such a way that $\supp \varphi_n \subset (B_{2 \delta} \ \{0\} )$ with arbitrary $\delta>0$, hence we can use it also in $\Omega$.
So it follows that, for $Z>1$, the spectrum of $\Hsa$ is unbounded from below.
\end{proof}

\subsection{Essential spectrum}

For $Z \in [0,\Zc)$ and $V$ decaying at infinity, the essential spectrum of $H$ remains $[0,+\infty)$, as in the case of an extra dimension of infinite extent, \cf~Thm.~\ref{thm.stab.inf}. 

\begin{theorem}\label{thm.ess}
Let $Z \in [0,\Zc)$, $V$ be given by~\eqref{rozklad.V} with $W(x) \to 0$ as $|x| \to + \infty$ and let $H = -\Delta +Z V$ be defined as a sum of forms. Then
$\spes(H)=[0,+\infty)$.
\end{theorem}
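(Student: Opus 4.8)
The plan is to establish the two inclusions $\spes(H) \supseteq [0,+\infty)$ and $\spes(H) \subseteq [0,+\infty)$ separately. For the first inclusion I would produce, for each $\lambda = \kappa^2 \geq 0$, a singular Weyl sequence, i.e.\ a sequence $\{u_n\} \subset \Dom(h)$ with $\|u_n\| = 1$, $u_n \rightharpoonup 0$ weakly, and $\|(H-\lambda)u_n\| \to 0$ (or, working at the form level, $\langle u_n, (H-\lambda)u_n\rangle \to 0$ together with the analogous quadratic-form Weyl criterion, cf.\ the characterization of $\spes$ for semibounded forms). The natural candidates are regularized plane waves in the $\R^3$ directions multiplied by the constant (zero) mode in the compact $x_4$ direction: take $u_n(x) = c_n \chi(x/n)\, e^{\ii p\cdot x'}$ with $x' = (x_1,x_2,x_3)$, $|p|^2 = \lambda$, $\chi \in C_0^\infty(\R^3)$ a fixed cut-off, and translate the centres of these bumps off to spatial infinity so that they avoid the singularity at the origin and converge weakly to zero. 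Because $V(x) \to 0$ as $|x|\to\infty$ (the $-1/|x|^2$ term and $W$ both vanish there), the potential contributes negligibly, and the computation reduces to the well-known free-Laplacian estimate $\|(-\Delta - \lambda)u_n\| \to 0$, exactly as in the $\R^4$ case alluded to after Thm.~\ref{thm.stab.inf}.

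For the reverse inclusion $\spes(H) \subseteq [0,+\infty)$, the cleanest route is a compactness/relative-compactness argument: show that $\spes(H) = \spes(H_0)$ by proving that the perturbation does not change the essential spectrum. Concretely, split $V = V_{\rm sing} + V_\infty$ where $V_{\rm sing}$ is $V$ cut off to a large ball $B_\rho(0)$ and $V_\infty = V - V_{\rm sing}$ is supported outside that ball. The tail $V_\infty$ is bounded by Thm.~\ref{thm.stab.comp}'s hypotheses and, since $V_\infty \to 0$ uniformly as $\rho \to \infty$ (using $W(x)\to 0$ and $|x|^{-2}\to 0$), it is a small bounded perturbation in form sense, pushing any change of $\spes$ into an arbitrarily small neighbourhood; alternatively one shows the form-resolvent difference is norm-small. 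The singular part $V_{\rm sing}$, being a form-bounded potential with bound $<1$ and compact support, should give rise to a \emph{relatively form-compact} perturbation of $H_0$ on $\Omega$; the standard way to see this is that $(H_0+1)^{-1/2} V_{\rm sing} (H_0+1)^{-1/2}$ is compact, which follows from the Rellich--Kondrachov theorem on the bounded region $B_\rho(0) \cap \Omega$ combined with the local $W^{1,2}$-control provided by the Hardy inequality \eqref{Hardy.ineq} (the $1/|x|^2$ singularity in $\R^4$ is form-compact relative to $-\Delta$ on bounded sets). Relatively form-compact perturbations preserve the essential spectrum (Weyl's theorem for forms), giving $\spes(H) = \spes(H_0) = [0,+\infty)$.

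As an alternative to the two-step splitting, one can run a single Weyl-sequence argument in the other direction: given $\lambda \in \spes(H)$ with $\lambda < 0$, a corresponding singular sequence $\{u_n\}$ must (after the off-to-infinity localization forced by weak convergence to zero) eventually live where $V$ is negligible, whence $\langle u_n, H u_n\rangle \approx \|\nabla u_n\|^2 \geq 0$, contradicting $\lambda < 0$; and $\lambda = 0$ is the bottom of $\spes(H_0)$ already captured above. This is essentially the same idea packaged without explicit compactness lemmas, and it is the style the authors seem to favour given their remark after Thm.~\ref{thm.stab.inf}.

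The main obstacle I anticipate is the handling of the $1/|x|^2$ singularity in the relative-compactness step: unlike a bounded or $L^p$-potential, $|x|^{-2}$ is exactly form-bounded with bound $1$ in $\R^4$, so it sits at the borderline of what Rellich--Kondrachov plus form-boundedness can deliver, and one must be a little careful to extract genuine \emph{compactness} rather than mere boundedness — typically by writing $|x|^{-2} = |x|^{-2}\mathbf{1}_{\{|x|<\epsilon\}} + |x|^{-2}\mathbf{1}_{\{\epsilon \le |x| \le \rho\}}$, noting the second piece is bounded with compact support (hence relatively form-compact by Rellich) and the first piece has small form-bound $O(\epsilon^0)$ but can be absorbed by again invoking Hardy on the ball, sending $\epsilon\to 0$. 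Near the compactified boundary $x_4 = \pm\pi R$ no new difficulty arises because the singularity is at the origin and periodicity is a benign boundary condition for the Sobolev embeddings. Everything else — the free-Laplacian Weyl estimate, the vanishing of $V$ at infinity — is routine.
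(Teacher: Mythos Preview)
Your argument for the inclusion $[0,+\infty)\subset\spes(H)$ via translated, rescaled plane waves in the $\R^3$ variables is essentially what the paper does, so that half is fine.

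The reverse inclusion, however, has a real gap. Your primary strategy relies on showing that the singular part $|x|^{-2}\mathbf{1}_{B_\rho}$ is relatively \emph{form-compact} with respect to $-\Delta$. In four dimensions this is false: the Hardy potential $|x|^{-2}$ scales exactly like the Laplacian, so the sequence $v_n(x)=n\,\phi(nx)$ with $\phi\in C_0^\infty(\R^4)$ satisfies $\|v_n\|_{L^2}\to 0$, $\|\nabla v_n\|_{L^2}=\|\nabla\phi\|_{L^2}$, yet $\int|x|^{-2}|v_n|^2=\int|x|^{-2}|\phi|^2\not\to 0$. Hence $(H_0+1)^{-1/2}|x|^{-2}\mathbf{1}_{B_\rho}(H_0+1)^{-1/2}$ is bounded but not compact, and Weyl's theorem for forms does not apply. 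Your proposed fix---splitting off $|x|^{-2}\mathbf{1}_{\{|x|<\epsilon\}}$---does not help: by the same scaling, the relative form bound of this piece is exactly $1$ for every $\epsilon>0$ (you wrote $O(\epsilon^0)$, which is just $O(1)$), so nothing becomes small as $\epsilon\to 0$. Your alternative singular-sequence sketch has the related defect that weak convergence to zero does \emph{not} force $\int|x|^{-2}|u_n|^2$ to vanish; the gradient energy can linger near the origin even as the $L^2$ mass escapes.

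The paper avoids this obstruction by \emph{Neumann bracketing}: split $\Omega$ into a bounded box $\tilde\Omega_n$ and its complement $\Omega_n$, and impose additional Neumann conditions on the new interface. On $\tilde\Omega_n$ the form domain is still $W^{1,2}(\tilde\Omega_n)$ (with periodic b.c.\ in $x_4$), and the compact embedding $W^{1,2}(\tilde\Omega_n)\hookrightarrow L^2(\tilde\Omega_n)$ gives $H^N_{\rm int}$ compact resolvent---hence $\spes(H^N_{\rm int})=\emptyset$---\emph{regardless} of whether the potential itself is relatively compact. On $\Omega_n$ one simply estimates $h^N_{\rm ext}[\psi]\geq -|Z|\|V\|_{L^\infty(\Omega_n)}\|\psi\|^2$. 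The min-max principle then yields $\inf\spes(H)\geq\inf\spes(H^N_{\rm ext})\geq -|Z|\|V\|_{L^\infty(\Omega_n)}\to 0$. The key conceptual point you are missing is that compactness of the resolvent on the interior comes from the bounded \emph{domain}, not from the potential.
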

\begin{proof} 
Since the potential $V$ decays at infinity, the interval $[0,+\infty)$ belongs to $\spes(H)$ as it can be seen from Weyl criterion, \cf~\cite[Thm.~VII.12]{Reed1} or Thm.~\ref{thm.weyl} in Appendix, and a usual construction of singular sequences, \cf~for instance~\cite[Sec.~I.3.2, II.2.2]{Krejcirik-2010-LN}, \cite{Krejcirik-2005-41} or \cite[Sec.~8.3]{Davies-1995}. In more detail, define functions
\begin{equation}\label{fcts.weyl}
\psi_n(x_1,x_2,x_3,x_4) := \varphi_n (x_1,x_2,x_3) \, e^{\ii (k,k,k)_.(x_1,x_2,x_3)},
\end{equation}
where 
\begin{equation}
\varphi_n(x_1,x_2,x_3):= n^{-3/2} \varphi 
\left(
\frac{x_1}{n}-n,\frac{x_2}{n}-n,\frac{x_3}{n}-n
\right)
\end{equation}
with some $\varphi \in C_0^{\infty}(\R_+^3)$ and $\|\varphi\|=1$. Functions $\varphi_n$ and $\psi_n$ satisfy $\supp \varphi_n \subset \overline{\Omega}_n$, where $\Omega_n  := (n,+\infty)^3 \times (-\pi R, \pi R)$, 
and 
\begin{equation}\label{fcts.weyl.prop}
\| \nabla \varphi_n\| = \frac{\| \nabla \varphi\|}{n}, 
\qquad 
\| \Delta \varphi_n\| = \frac{\| \Delta \varphi\|}{n^2}, 
\qquad
\|\varphi_n\| = 1,
\qquad
\|\psi_n\| = 1.
\end{equation}
Moreover, using the first representation theorem \cite[Thm.~VI.2.1]{Kato-1966} part $iii)$, \cf~Thm.~\ref{thm.kato} in Appendix,  we show that $\psi_n \in \Dom(H)$. Clearly, $\psi_n \in \Dom(h) = \Dom(h_0)$, \cf~\eqref{h0.def.comp}. Taking an arbitrary $\varphi\in \Dom(h)$ and integrating by parts, we obtain 
$h(\varphi, \psi_n)=
\langle\nabla \varphi, \nabla \psi_n \rangle + Z v(\varphi,\psi_n)
= \langle \varphi, -\Delta \psi_n + Z V \psi_n \rangle.$
Since $-\Delta \psi_n - Z V \psi_n \in L^2(\Omega)$, we get (by the first representation theorem) that $\psi_n \in \Dom(H)$ .

Finally, a direct computation gives
\begin{equation}
|-\Delta \psi_n-k^2\psi_n| = |\Delta\varphi_n|^2 +4|k \cdot \nabla\varphi_n|^2 
\leq
|\Delta\varphi_n|^2 +4k^2 |\nabla\varphi_n|^2,
\end{equation}
which implies, using~\eqref{fcts.weyl.prop},
\begin{equation}
\| -\Delta \psi_n-k^2\psi_n\| 
\leq
\|\Delta\varphi_n\|^2 +4k^2\|\nabla\varphi_n\|^2 \xrightarrow{\scriptscriptstyle n\to\infty} 0.
\end{equation}
Hence 
\begin{equation}
\begin{aligned}
\|-\Delta \psi_n + Z V \psi_n - k^2 \psi_n\| 
& \leq 
\|-\Delta \psi_n - k^2 \psi_n\| + \|Z V \psi_n\| 
 \leq 
\|-\Delta \psi_n - k^2 \psi_n\| + |Z| \|V\|_{L^{\infty}(\Omega_n)} 
\\
&\leq
\sqrt{ \|-\Delta \varphi_n\|^2  + 4 k^2 \|\nabla \varphi_n\|^2 }
+ 
|Z| \|V\|_{L^{\infty}(\Omega_n)} \xrightarrow{\scriptscriptstyle n\to\infty} 0,
\end{aligned}
\end{equation}
so $[0,\infty) \subset \sigma(H)$ by Weyl's criterion, \cf~Thm.~\ref{thm.weyl}. Since the interval $[0,\infty)$ has clearly no isolated points, it must be in $\spes(H)$.

On the other hand, $\spes(H) \subset [0,+\infty)$ since $\inf \spes(H) \geq 0$. The latter can be justified by the so-called min-max principle, \cf~\cite[Sec.~XIII.1]{Reed4} or~\cite[Sec.~I.2.4]{Krejcirik-2010-LN}, and Neumann bracketing, \cf~\cite[Sec.~XIII.15]{Reed4} or~\cite[Sec.~II.2.1, II.2.2]{Krejcirik-2010-LN}. The latter consists of dividing $\Omega = \Omega_n \cup \tilde \Omega_n \cup \Sigma_n$,
where $\tilde \Omega_n := (-n,n)^3 \times (-\pi R, \pi R) $ and $\Sigma_n$ represents a new boundary, and considering two operators $H_{\rm int}^N$, $H_{\rm ext}^N$ acting in $L^2(\tilde \Omega_n)$, $L^2(\Omega_n)$, respectively, being ``restrictions'' of $H$ with additional Neumann boundary conditions on $\Sigma_n$. More precisely, these operators are defined via quadratic forms
\begin{equation}
\begin{aligned}
h_{\rm int}^N[\psi] & := \|\nabla \psi\|^2 + Z v[\psi], 
\quad 
\Dom(h_{\rm int}^N) := \{ \psi \in W^{1,2}(\tilde \Omega_n)  : \psi(x_1,x_2,x_3,-\pi R)=\psi(x_1,x_2,x_3,\pi R) \},
\\
h_{\rm ext}^N[\psi] & := \|\nabla \psi\|^2 + Z v[\psi], 
\quad 
\Dom(h_{\rm ext}^N) := \{ \psi \in W^{1,2}(\Omega_n)  : \psi(x_1,x_2,x_3,-\pi R)=\psi(x_1,x_2,x_3,\pi R) \}.
\end{aligned}
\end{equation}
Since $W^{1,2}(\tilde \Omega_n)$ is compactly embedded in $L^2(\tilde \Omega_n)$, \cf~\cite[Thm.~6.3]{Adams1975}, the resolvent of $H_{\rm int}^N$ is compact, \cf~\cite[Thm.~XIII.64]{Reed4}, therefore $\spes(H_{\rm int}^N) = \emptyset$. Finally, the min-max principle yields
\begin{equation}
\inf \spes(H) \geq \inf \spes(H_{\rm ext}^N) \geq \inf \sigma(H_{\rm ext}^N) \geq  -|Z|\| V\|_{L^{\infty}(\Omega_n)} \to 0,
\end{equation}
where the last inequality follows from the estimate 
$h_{\rm ext}^N[\psi]\geq Z\int_{\Omega_n} V|\psi|^2 \geq -|Z|\| V\|_{L^{\infty}(\Omega_n)}  \|\psi\|^2$.
\end{proof}

\subsection{Negative eigenvalues}

Unlike the case of the infinitely extended extra dimension, \cf~Thm.~\ref{thm.stab.inf}, $H$ may posses an infinite number of bound states.

\begin{theorem}\label{thm.bound.comp}
Let $Z \in (0,\Zc)$, $V$ be given by~\eqref{rozklad.V} and satisfying
\begin{equation}
V(x) \leq - \frac{a}{(x_1^2+x_2^2+x_3^2)^{1 - \varepsilon}}, \quad \mbox{ if } |x| \geq \rho_0,
\end{equation}
for some $a$, $\varepsilon$, $\rho_0 >0$.
Then the negative spectrum of $H = -\Delta + Z V$, defined as a sum of forms, consists of an infinite number of eigenvalues. 
\end{theorem}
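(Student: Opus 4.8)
The plan is to invoke the min-max principle in the form of \cite[Thm.~XIII.64]{Reed4}. Since, by Thm.~\ref{thm.ess}, $\spes(H)=[0,+\infty)$, it will suffice to exhibit, for every $N\in\N$, an $N$-dimensional subspace $\mathcal{L}_N\subset\Dom(h)=\Dom(h_0)$ on which the form $h[\psi]=\|\nabla\psi\|^2+Z\,v[\psi]$ is strictly negative; this forces $H$ to have at least $N$ eigenvalues in $(-\infty,0)$, and letting $N\to\infty$ gives the claim. Here I use $Z>0$ so that the long-range tail of the potential is attractive, and $Z<1$ only via Thm.~\ref{thm.stab.comp}, which makes $H$ a well-defined self-adjoint operator, bounded from below.

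To build $\mathcal{L}_N$ I would use trial functions constant in the compactified coordinate $x_4$ and supported on thin shells receding to infinity, so that the hypothesis $V(x)\le-a\,(x_1^2+x_2^2+x_3^2)^{-1+\varepsilon}$ --- a tail decaying strictly slower than $|x|^{-2}$ --- overcomes the kinetic energy. Concretely, fix $g\in C_0^{\infty}(\R^3)$, $g\not\equiv0$, with $\supp g\subset\{1<|y|<2\}$, choose a scale $L\ge2$, and for $j\in\N$ set $g_j(y):=g(y/L^j)$ and $\psi_j(x_1,x_2,x_3,x_4):=g_j(x_1,x_2,x_3)$. Each $\psi_j$ is independent of $x_4$, hence trivially periodic, and lies in $\Dom(h_0)$; the supports $\{L^j<|y|<2L^j\}$ are pairwise disjoint for $L\ge2$, so the $\psi_j$ are mutually orthogonal and, on any finite linear combination, both $\|\nabla\cdot\|^2$ and $v[\cdot]$ split as sums over $j$ with no cross terms.

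The next step is the estimate on a single $\psi_j$. For $j$ so large that $L^j\ge\rho_0$ (then $\{|x_{123}|\ge L^j\}\subset\{|x|\ge\rho_0\}$), the $x_4$-integration --- the bound on $V$ being independent of $x_4$ --- gives $v[\psi_j]\le-2\pi R\,a\int_{\R^3}|g_j(y)|^2\,|y|^{-2+2\varepsilon}\,\dd y$, while $\|\nabla\psi_j\|^2=2\pi R\,\|\nabla g_j\|_{L^2(\R^3)}^2$. Using the elementary scalings $\|\nabla g_j\|^2=L^{j}\|\nabla g\|^2$ and $\int_{\R^3}|g_j|^2|y|^{-2+2\varepsilon}\,\dd y=L^{j(1+2\varepsilon)}\int_{\R^3}|g|^2|z|^{-2+2\varepsilon}\,\dd z$, I obtain
\[
h[\psi_j]\ \le\ 2\pi R\Bigl(L^{j}\|\nabla g\|^2-Z\,a\,L^{j(1+2\varepsilon)}\!\int_{\R^3}\frac{|g|^2}{|z|^{2-2\varepsilon}}\,\dd z\Bigr).
\]
Because $1+2\varepsilon>1$, the second term dominates and $h[\psi_j]<0$ for all $j\ge j_0$ for some $j_0$. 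Then for any $N$ the space $\mathcal{L}_N:=\operatorname{span}\{\psi_{j_0},\dots,\psi_{j_0+N-1}\}$ is $N$-dimensional and $h\bigl[\sum_k c_k\psi_{j_0+k}\bigr]=\sum_k|c_k|^2\,h[\psi_{j_0+k}]<0$ by the disjointness of the supports, which is exactly the hypothesis of \cite[Thm.~XIII.64]{Reed4}.

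I do not expect a genuine obstacle. The points needing a little care are: checking that the trial functions truly lie in $\Dom(h_0)$ (immediate, since they are smooth, compactly supported in $\R^3$, and $x_4$-independent); using that the upper bound on $V$ in the hypothesis does not depend on $x_4$, so the compact direction only contributes the harmless factor $2\pi R$; and the bookkeeping of the two scaling exponents, the whole mechanism being that $2(1-\varepsilon)<2$ makes the potential integral scale with a strictly larger power of $L^j$ than the Dirichlet integral. As an alternative packaging one could restrict to the zero Fourier mode in $x_4$, reducing to a one-dimensional radial Schr\"odinger operator on $(0,\infty)$ with a potential decaying like $r^{-2+2\varepsilon}$, and apply a Bargmann/Birman--Schwinger-type counting criterion; the dyadic-shell construction above sidesteps that route.
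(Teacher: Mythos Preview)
Your argument is correct and follows essentially the same route as the paper: trial functions constant in $x_4$, supported on disjoint annular shells in $\R^3$, scaled so that the slowly decaying potential term $\sim r^{-2+2\varepsilon}$ beats the kinetic term $\sim r^{-2}$, with the conclusion via the min-max/Rayleigh--Ritz principle. The only cosmetic differences are that the paper normalizes its $\psi_\rho$ (obtaining $h[\psi_\rho]\le C_1\rho^{-2}-C_2\rho^{-2+2\varepsilon}$) and cites \cite[Thm.~XIII.3, XIII.6]{Reed4} rather than \cite[Thm.~XIII.64]{Reed4} at the final step.
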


\begin{proof}
The claim follows from a simple and straightforward adaptation of the proof of \cite[Thm.~XIII.6]{Reed4}.  Take a non-negative function $\phi \in C_0^{\infty}(\R^3)$ with $\supp \phi \subset \{ x \in \R^3 \, : \: 1 < r < 2\}$, where $r^2:=x_1^2 + x_2^2 + x_3^2$, and $\|\phi\|_{L^2(\R^3)}=1$. For $\rho >0$, define $\psi_{\rho}(x_1,x_2,x_3,x_4):= (2 \pi R)^{-1/2} \rho^{-3/2} \phi(x_1 \rho^{-1},x_2 \rho^{-1},x_3 \rho^{-1})$. Hence, $\|\psi_\rho\|=1$, $\supp \psi_\rho \subset \{ x \in \Omega \, : \, \rho < r < 2 \rho \}$. Clearly $\psi_\rho \in \Dom(h)$ and, similarly as in the proof of Thm.~\ref{thm.ess}, it can be verified that $\psi_\rho \in \Dom(H)$. Moreover, for $\rho > \rho_0$,
\begin{equation}
h[\psi_\rho] = 
\langle \psi_\rho, -\Delta \psi_\rho \rangle + Z  \langle \psi_\rho, V \psi_\rho \rangle
\leq \rho^{-2} \langle \psi_1, - \Delta \psi_1 \rangle - Z a \rho^{-2 + 2 \varepsilon } \langle \psi_1, r^{-2 + 2\varepsilon} \psi_1 \rangle,
\end{equation}
where the last expression is negative for large $\rho$. Taking a sequence $\{\rho_n\}_n$ such that $\psi_{\rho_n}$ have disjoint supports, we obtain an orthonormal sequence of trial functions for which $\langle \psi_{\rho_n}, H \psi_{\rho_n} \rangle < 0$ and $\langle \psi_{\rho_n}, H \psi_{\rho_m} \rangle =0$ if $n\neq m$. The proof is concluded by the Rayleigh-Ritz principle, \cf~\cite[Thm.~XIII.3]{Reed4} and the proof of \cite[Thm.~XIII.6]{Reed4} for further details.
\end{proof}

\section{Hamiltonian of circular compactification} \label{sec.comp}
We derive a particular potential $\Vc$ corresponding to two charged particles in a space with circularly compactified extra dimension and apply previous stability and spectral results; the latter yields a critical compactification radius $\Rc$. Finally, we obtain an upper bound for the ground state energy of the corresponding Hamiltonian $\Hc$.

\subsection{Method of images and stability}
\label{subsec.moi}

The potential $\Vc$ is obtained from the so-called method of images, \cf~\cite{Bures-2007} for further details. The method consists in expanding, or "unrolling", the compactified dimension to get an infinite periodic space which repeats itself with the period of~$2\pi R$, where $R$ is the radius of the compactified dimension. In this way we get an infinite "chain" of systems periodically spaced along the extra dimension. 
If we again denote $(x_1,x_2,x_3):=\vec{r}$ coordinates in the extended 3-dimensional space and $x_4$ the coordinate along the compactified dimension (now unrolled), 
we have, instead of the potential of a single charge, the potential of a "chain" of charges placed at points
$(\vec{0},0)$, $(\vec{0},2\pi R)$, $(\vec{0},4\pi R)$, etc.
Hence, we calculate the overall potential simply by summing up the four-dimensional potentials
$-1/|x|^2$, \cf~\eqref{schr.eq}:
\begin{equation}\label{V.def}
	\Vc (x) := -\sum_{n=-\infty}^{\infty} \frac1{x_1^2 + x_2^2 + x_3^2 + (x_4-2 \pi n R)^2  } 
    =  - \frac1{2R r} \frac{\sinh(r/R)}{\cosh(r/R)- \cos(x_4/R)},
\end{equation}
where the equality can be showed with the help of the residue theorem, \cf~\cite[appendix C]{Bures-2007}.

For $r\ll R$ and $x_4\ll R$,  $\Vc \sim -1/(r^2+x_4^2)$, so, close to the charge, $\Vc$ behaves as in the uncompactified case. On the other hand, for $r$ large, $\Vc \sim -1/(2rR)$, \ie~the usual three-dimensional behaviour is restored. 
By comparing potential energies, \ie~$-e_3^2/r$ and $-\ef/(2rR)$, in the latter region,
we establish the relation between both charges:
\begin{equation}\label{rel.charges}
\ef=2R\et.
\end{equation} 
Hence, the parameter $Z$ is expressed by means of the Bohr radius $a_0=\hbar^2/m\et$: 
\begin{equation}\label{Z.R.rel}
Z=\frac{2m\ef}{\hbar^2}=\frac{4Rm\et}{\hbar^2}=\frac{4R}{a_0}.
\end{equation}

Combining the latter and Thms.~\ref{thm.stab.comp}, \ref{thm.c.inst}, we determine the critial radius $\Rc = a_0 /4$ above which the system becomes instable:
\begin{corollary}\label{cor.compact}
	The potential $\Vc$, defined in \eqref{V.def}, satisfies \eqref{rozklad.V} with $W(x) \to 0$ as $|x| \to +\infty$.
	Hence, for $R \in (0,\Rc)$, the corresponding Hamiltonian $$\Hc  = -  	\Delta + \frac{4R}{a_0} \Vc,$$ 
	defined as a sum of forms, is bounded from below and $\spes(\Hc)=[0,+\infty)$. 
	For $R > \Rc$, any self-adjoint extension of a symmetric minimal operator $\Hm$, defined in \eqref{Hmin.comp} with $V = \Vc$, has spectrum unbounded from below (and above).
\end{corollary}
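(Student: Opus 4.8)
The plan is to reduce the whole statement to the already-established Theorems~\ref{thm.stab.comp}, \ref{thm.ess} and \ref{thm.c.inst} by means of the charge relation \eqref{rel.charges}--\eqref{Z.R.rel}; the only genuinely new ingredient is the verification that $\Vc$ has the structure \eqref{rozklad.V} with a bounded $W$ that decays at infinity.

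First I would isolate the $n=0$ term in the series \eqref{V.def}. Writing $r^2 := x_1^2 + x_2^2 + x_3^2$ and using that on $\Omega$ one has $x_4 \in (-\pi R, \pi R)$, so that $|x|^2 = r^2 + x_4^2$ as a point of $\R^4$, the $n=0$ term is exactly $-|x|^{-2}$, hence
\[
W(x) := \Vc(x) + \frac{1}{|x|^2} = -\sum_{n \neq 0} \frac{1}{r^2 + (x_4 - 2\pi n R)^2}.
\]
To see that $W \in L^{\infty}(\Omega)$ I would note that for $|n| \geq 1$ and $|x_4| < \pi R$ one has $|x_4 - 2\pi n R| \geq \pi R(2|n| - 1)$, hence $0 \leq -W(x) \leq \sum_{n \neq 0} \bigl(\pi R(2|n| - 1)\bigr)^{-2} = (4R^2)^{-1}$ uniformly in $x$. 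For the decay I would use the closed form in \eqref{V.def}: since $\sinh(r/R)/\bigl(\cosh(r/R) - \cos(x_4/R)\bigr) \to 1$ as $r \to +\infty$, one has $\Vc(x) \sim -1/(2Rr) \to 0$, while $|x|^{-2} \leq r^{-2} \to 0$; because $x_4$ ranges over a bounded interval on $\Omega$, the condition $|x| \to +\infty$ is equivalent to $r \to +\infty$, so $W(x) \to 0$. This proves the first assertion.

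It then remains to feed the hypotheses on $R$ into the earlier theorems via $Z = 4R/a_0$, \cf~\eqref{Z.R.rel}: the range $R \in (0,\Rc) = (0, a_0/4)$ corresponds to $Z \in (0,1) = (0,\Zc)$, and $R > \Rc$ corresponds to $Z > \Zc$. For $R \in (0,\Rc)$, Theorem~\ref{thm.stab.comp} (whose hypothesis is precisely the just-verified \eqref{rozklad.V}) gives that $\Hc = -\Delta + (4R/a_0)\Vc$ is self-adjoint and bounded from below, and Theorem~\ref{thm.ess} (whose extra hypothesis $W(x) \to 0$ was also just verified) gives $\spes(\Hc) = [0,+\infty)$. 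For $R > \Rc$, Theorem~\ref{thm.c.inst} applied with $V = \Vc$ gives that any self-adjoint extension of the minimal operator $\Hm$ from \eqref{Hmin.comp} has spectrum unbounded from below and from above.

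There is no real obstacle here: once the decomposition of $\Vc$ is in place, the corollary follows at once. The only points that deserve a line of care are the elementary convergence estimate for the series defining $W$ and the observation that, since the compactified coordinate $x_4$ is bounded, ``$|x| \to \infty$'' on $\Omega$ is the same as ``$r \to \infty$'', so that the $1/r$ tail of the three-dimensional potential indeed controls the decay of $W$.
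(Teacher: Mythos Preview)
Your proposal is correct and follows essentially the same route as the paper's proof: isolate the $n=0$ term to define $W$, bound $|W|$ by the same elementary estimate $|x_4-2\pi nR|\geq \pi R(2|n|-1)$ summing to $(4R^2)^{-1}$, observe $W(x)\to 0$ at infinity, and then invoke Theorems~\ref{thm.stab.comp}, \ref{thm.ess}, \ref{thm.c.inst} via $Z=4R/a_0$. You are in fact slightly more explicit than the paper about the decay of $W$ (the paper simply asserts it), but the argument is otherwise the same.
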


\begin{proof}
We need to verify the assumptions of Thms.~\ref{thm.stab.comp}, \ref{thm.c.inst} and~\ref{thm.ess}, where~\eqref{Z.R.rel} is used.
Clearly, $\Vc(x) = - |x|^{-2} + W(x)$, where
\begin{equation}\label{fct.W}
\begin{aligned}
W(x):=-\underset{n\neq 0}{\sum_{n=-\infty}^{\infty}}
\frac1{r^2+ (x_4-2 \pi n R)^2  }.
\end{aligned}
\end{equation}
Simple estimates show that $W$ is bounded:
\begin{equation}\label{inequality-bound}
\begin{aligned}
|W(x)| & =
\underset{n\neq 0}{\sum_{n=-\infty}^{\infty}}\frac1{r^2+(x_4-2\pi R n)^2}\leq
\underset{n\neq 0}{\sum_{n=-\infty}^{\infty}}\frac1{(x_4-2\pi R n)^2}\leq
\underset{n\neq 0}{\sum_{n=-\infty}^{\infty}}\frac1{(\pi R-2\pi R|n|)^2}
\\
&=\frac1{\pi^2 R^2} \underset{n\neq 0}{\sum_{n=-\infty}^{\infty}}\frac1{(2|n|-1)^2}
=\frac2{\pi^2 R^2} \sum_{n=1}^{\infty}\frac1{(2n-1)^2}
=\frac2{\pi^2 R^2} \sum_{n=0}^{\infty}\frac1{(2n+1)^2}
=\frac1{4R^2},
\end{aligned}
\end{equation}
because $\sum_{n=0}^{\infty}\frac1{(2n+1)^2}=\lambda(2)=\pi^2/8$, see \eg~\cite[p.~368]{arfken2005mathematical}. 
Moreover, $W(x) \to 0$ as $|x| \to + \infty$.
\end{proof}

\subsection{Bound states for $R<\Rc$}\label{spec.ham.cir}
The potential $\Vc$ satisfies the assumptions of Thm.~\ref{thm.bound.comp}, \cf~the right hand side  of \eqref{V.def}, hence $\Hc$ has an infinite number of negative eigenvalues. Moreover,  $\Vc$ has a special property

\begin{equation}\label{magic-formula}
\int_{-\pi R}^{\pi R} \Vc(x)\dd x_4=-\frac{\pi}{r},
\end{equation}
which can be seen from (\cf~the integral formula~\cite[2.553,3]{gradshteyn2007})
\begin{equation*}
\frac{\sinh(r/R)}{2R} \int_{-\pi R}^{\pi R} \frac{\dd x_4}{\cosh(r/R)- \cos(x_4/R)}
=\arctan \left(\frac{(\cosh(r/R)-1)\tan(x_4/2R)}{\sinh(r/R)}\right)\Biggr|_{-\pi R}^{\pi R}=\pi.
\end{equation*}
The property \eqref{magic-formula} indicates that the eigenfunctions of the usual three dimensional hydrogen, 
\begin{equation}\label{3d-hyd}
\phi_{Nlm}(r,\theta,\varphi)=C_{Nlm}r^l e^{-\frac{r}{Na_0}} L_{N-l-1}^{2l+1}\left(\frac{2r}{Na_0}\right) P_l^m(\cos\theta) e^{\ii m \varphi},
\end{equation}
where $N\in\N$, $l\in\{0,1,\dots N-1\}$, $m\in\{-l\dots l\}$ and $C_{Nlm}$ are normalization constants, become suitable trial functions for negative spectrum of $\Hc$. For instance, they can be used to obtain an alternative proof of  the existence of the infinite number of negative eigenvalues of $\Hc$. For our purposes, we use $\phi_{100}$ to find an upper bound for the ground state energy of $\Hc$.
\begin{corollary}\label{cor.ground.st}
Let $R \in (0,\Rc)$ and let $\Hc$ be as in Cor.~\ref{cor.compact}.
Then the negative spectrum of $\Hc$ consists of an infinite number of eigenvalues and $\min \sigma(\frac{\hbar^2}{2m} \Hc) \leq-me_3^4/(2\hbar^2)$.
\end{corollary}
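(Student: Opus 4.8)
The plan is to treat the two assertions in the corollary separately. The statement that the negative spectrum of $\Hc$ is infinite is essentially already available: by Cor.~\ref{cor.compact} the potential $\Vc$ has the form \eqref{rozklad.V} with $W$ bounded and vanishing at infinity, and since $\Vc \sim -1/(2Rr)$ for large $r$ one checks immediately that $\Vc(x) \le -a/(x_1^2+x_2^2+x_3^2)^{1-\varepsilon}$ for $|x| \ge \rho_0$ with, say, $\varepsilon = 1/2$ and any $a < 1/(2R)$. Hence Thm.~\ref{thm.bound.comp} applies and yields infinitely many negative eigenvalues.

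For the energy bound I would use the min-max (Rayleigh--Ritz) principle with a single, carefully chosen trial function. Since $\inf \sigma(\Hc)$ equals the infimum of $h[\psi]/\|\psi\|^2$ over the form domain $\Dom(h)$, it suffices to exhibit one admissible $\psi$ for which, after multiplication by $\hbar^2/2m$, the Rayleigh quotient equals $-me_3^4/(2\hbar^2)$. The natural candidate is the ground state $\phi_{100}$ of the ordinary three-dimensional hydrogen atom, \eqref{3d-hyd}, extended trivially along the compact direction: $\psi(x_1,x_2,x_3,x_4) := (2\pi R)^{-1/2}\phi_{100}(r)$. This function is smooth, exponentially decaying, and constant (hence periodic) in $x_4$, so $\psi \in W^{1,2}(\Omega)$ with the periodic boundary condition; moreover $\Vc|\psi|^2 \in L^1(\Omega)$ since $|x|^{-2}$ is locally integrable in four dimensions and $\Vc$ decays at infinity, so $\psi \in \Dom(h)$ (and in fact $\psi \in \Dom(\Hc)$, verified exactly as in the proof of Thm.~\ref{thm.ess}). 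With the chosen normalization, $\|\psi\|_{L^2(\Omega)} = \|\phi_{100}\|_{L^2(\R^3)} = 1$.

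The computation then splits into a kinetic and a potential part. Because $\psi$ is independent of $x_4$, the $x_4$-derivative vanishes and the trivial $x_4$-integration gives $\langle \psi, -\Delta \psi \rangle_{L^2(\Omega)} = \langle \phi_{100}, -\Delta_{\R^3} \phi_{100} \rangle_{L^2(\R^3)}$. For the potential term I would invoke the ``magic formula'' \eqref{magic-formula}: by Fubini,
\[
\frac{4R}{a_0}\langle \psi, \Vc \psi \rangle = \frac{4R}{a_0}\cdot\frac{1}{2\pi R}\int_{\R^3} |\phi_{100}(r)|^2 \Big( \int_{-\pi R}^{\pi R} \Vc \,\dd x_4 \Big)\dd x = -\frac{2}{a_0}\int_{\R^3} \frac{|\phi_{100}(r)|^2}{r}\,\dd x .
\]
Multiplying the quadratic form by $\hbar^2/2m$ and using $a_0 = \hbar^2/(m e_3^2)$, which turns the factor $\tfrac{\hbar^2}{2m}\cdot\tfrac{2}{a_0}$ into $e_3^2$, collapses $\langle \psi, \tfrac{\hbar^2}{2m}\Hc \psi \rangle$ into $\langle \phi_{100}, (-\tfrac{\hbar^2}{2m}\Delta_{\R^3} - e_3^2/r)\phi_{100}\rangle$, i.e.\ the expectation value of the genuine hydrogen Hamiltonian in its ground state, which equals the ground-state eigenvalue $-me_3^4/(2\hbar^2)$. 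Since $\|\psi\| = 1$, the min-max principle gives $\min \sigma(\tfrac{\hbar^2}{2m}\Hc) \le -me_3^4/(2\hbar^2)$.

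I do not expect a serious obstacle here. The only points requiring some care are the membership $\psi \in \Dom(h)$ (equivalently, integrability of $\Vc|\psi|^2$ near the singularity and at infinity) and the use of Fubini to pull the $x_4$-integral inside, both routine given the exponential decay of $\phi_{100}$ and the local $L^1$ bound on $|x|^{-2}$ in $\R^4$. One should also keep the bookkeeping of the overall factor $\hbar^2/2m$ and of the relation $Z = 4R/a_0$ from \eqref{Z.R.rel} consistent throughout; this is the step where a sign or factor is most easily lost.
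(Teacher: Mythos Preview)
Your proposal is correct and follows essentially the same route as the paper: both invoke Thm.~\ref{thm.bound.comp} for the infinitude of negative eigenvalues, and for the energy bound both take the trial function $\psi=(2\pi R)^{-1/2}\phi_{100}$, use the identity \eqref{magic-formula} to integrate out $x_4$, and reduce the Rayleigh quotient to the three-dimensional hydrogen ground-state energy. Your write-up is slightly more explicit about the factor bookkeeping and the verification of the decay hypothesis in Thm.~\ref{thm.bound.comp}, but the argument is the same.
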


\begin{proof}
The existence of the infinite number of negative eigenvalues follows from \eqref{V.def} and Thm.~\ref{thm.bound.comp}. 
To show the upper bound of $\min \sigma(\frac{\hbar^2}{2m} \Hc)$, define $\psi:=(2 \pi R)^{-1/2} \phi_{100}$, \cf~\eqref{3d-hyd}. It is not difficult to verify that $\psi \in \Dom(h)$ and $\|\psi\| =1$. Moreover,
\begin{equation} \label{neg}
h[\psi]=
h_0[\psi]-\frac{4R}{a_0} v[\psi]=
\|\nabla_3 \phi_{100}\|^2_{L^2(\R^3)}-\frac{2}{a_0}\int_{\R^3}\frac{|\phi_{100}|^2}{r}=
-\frac1{a_0^2},
%-\frac{Z^2}{16R^2},
\end{equation}
where the last step follows from the fact that $\phi_{100}$ is the ground state of 3-dimensional hydrogen.
Using $a_0=\hbar^2/m\et$, we finally obtain
\begin{equation}\label{inf-estim}
\frac{\hbar^2}{2m} \min \sigma( \Hc)
\leq 
\frac{\hbar^2}{2m}h[\psi]=-\frac{\hbar^2}{2m}\frac1{a_0^2}=-\frac{me_3^4}{2\hbar^2}.
\end{equation}
\end{proof}

\appendix
\section{}

\counterwithin{theorem}{section}
 \setcounter{theorem}{0}

\begin{theorem}[The first representation theorem, {\cite[Thm.VI.2.1]{Kato-1966}}]\label{thm.kato}
	Let $h: \Dom(h) \times \Dom(h) \to \C$ be a densely defined, symmetric, bounded from below and closed sesquilinear form in $\H$. Then there exists a self-adjoint operator $H$ such that 
	\begin{enumerate}[i)]
		 \setlength{\itemsep}{0pt}
		 \setlength{\parskip}{0pt}
	\item $\Dom(H) \subset \Dom(h)$ and $h(\phi,\psi) = \langle \phi, H \psi \rangle$ for every $\phi \in \Dom(h)$ and $\psi \in \Dom(H)$;
	\item $\Dom(H)$ is a core of $h$;
	\item if $\psi \in \Dom(h)$, $\eta \in \H$, and $h(\phi, \psi) = \langle \phi, \eta \rangle$ holds for every $\phi$ belonging to a core of $h$, then $\psi \in \Dom(H)$ and $H \psi = \eta$. The self-adjoint operator $H$ is uniquely determined by the \mbox{condition $i)$}.
	\end{enumerate}
\end{theorem}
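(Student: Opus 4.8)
The plan is to turn the statement into an application of the Riesz representation theorem in a Hilbert space manufactured from the form itself. First I would use semiboundedness: since $h$ is bounded from below there is $\beta\in\R$ with $h[\psi]:=h(\psi,\psi)\geq\beta\|\psi\|^2$, so I replace $h$ by the strictly positive form $\tilde h:=h-(\beta-1)\langle\cdot,\cdot\rangle$, for which $\tilde h[\psi]\geq\|\psi\|^2$; this additive shift corresponds to subtracting a multiple of the identity from the operator, which I undo at the very end by setting $H:=\tilde H+(\beta-1)I$. Working with $\tilde h$, I equip $\Dom(\tilde h)$ with the inner product $\langle\phi,\psi\rangle_{\tilde h}:=\tilde h(\phi,\psi)$. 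The decisive structural point is that closedness of the form is, by definition, exactly completeness of $\Dom(\tilde h)$ in the form norm; hence $\H_{\tilde h}:=(\Dom(\tilde h),\langle\cdot,\cdot\rangle_{\tilde h})$ is a Hilbert space, and the estimate $\|\psi\|^2\leq\tilde h[\psi]$ shows the natural inclusion $\H_{\tilde h}\hookrightarrow\H$ is continuous with norm $\leq 1$.

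Next, for each $\eta\in\H$ the map $\psi\mapsto\langle\eta,\psi\rangle$ is a bounded linear functional on $\H_{\tilde h}$, so the Riesz representation theorem provides a unique $G\eta\in\Dom(\tilde h)$ with $\tilde h(G\eta,\psi)=\langle\eta,\psi\rangle$ for all $\psi\in\Dom(\tilde h)$. I would then read off the elementary properties of $G\colon\H\to\H$: it is linear (being the composition of two conjugation-reversing maps) and bounded with $\|G\|\leq 1$; it is symmetric, hence self-adjoint, since symmetry of $\tilde h$ gives $\langle G\eta_1,\eta_2\rangle=\tilde h(G\eta_1,G\eta_2)=\langle\eta_1,G\eta_2\rangle$; it is non-negative with $G\leq I$ because $\langle G\eta,\eta\rangle=\tilde h(G\eta,G\eta)\geq 0$; and it is injective because $G\eta=0$ forces $\langle\eta,\psi\rangle=0$ for all $\psi$ in the dense set $\Dom(\tilde h)$. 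An injective, non-negative, bounded self-adjoint operator has a self-adjoint (generally unbounded) inverse, so I define $\tilde H:=G^{-1}$ with $\Dom(\tilde H):=\operatorname{Ran}(G)$; from $0<G\leq I$ one gets $\tilde H\geq I$.

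It then remains to verify the three properties and uniqueness. Property $i)$ is immediate: every $\psi\in\Dom(\tilde H)$ is $\psi=G\eta$ with $\eta=\tilde H\psi$, and conjugating the defining identity (using symmetry of $\tilde h$) yields $\tilde h(\phi,\psi)=\langle\phi,\tilde H\psi\rangle$ for all $\phi\in\Dom(\tilde h)$, while $\operatorname{Ran}(G)\subset\Dom(\tilde h)$ by construction. Property $ii)$, that $\Dom(\tilde H)=\operatorname{Ran}(G)$ is a core, i.e. dense in $\H_{\tilde h}$, follows because $\langle G\eta,\psi\rangle_{\tilde h}=\langle\eta,\psi\rangle=0$ for all $\eta\in\H$ forces $\psi=0$. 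For $iii)$, given $\psi\in\Dom(\tilde h)$ and $\eta\in\H$ with $\tilde h(\phi,\psi)=\langle\phi,\eta\rangle$ on a core, I compare with $i)$ applied to $G\eta$ to obtain $\tilde h(\phi,\psi-G\eta)=0$ on a dense subset of $\H_{\tilde h}$, whence $\psi=G\eta\in\Dom(\tilde H)$ and $\tilde H\psi=\eta$. Uniqueness is a consequence of $iii)$: any self-adjoint $H'$ satisfying $i)$ obeys $H'\subset\tilde H$, and a self-adjoint operator admits no proper self-adjoint extension, so $H'=\tilde H$. Transferring through the shift gives the operator $H$ associated with $h$.

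The main obstacle is conceptual rather than computational: it is the identification of closedness of $h$ with completeness of $\Dom(h)$ in the form norm, which is precisely what licenses the use of Riesz representation and converts an a priori operator-theoretic question into a clean Hilbert-space duality argument. The remaining difficulty is bookkeeping — keeping track of conjugate-linearity in the correct slot when passing among $\tilde h$, the inner product of $\H_{\tilde h}$, and $\langle\cdot,\cdot\rangle$, and maintaining the additive shift $\beta$ consistently so that all three properties and the semiboundedness transfer back to $h$.
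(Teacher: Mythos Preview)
The paper does not give its own proof of this statement: Theorem~\ref{thm.kato} is stated in the Appendix purely as a citation of \cite[Thm.~VI.2.1]{Kato-1966}, without any argument. There is therefore nothing in the paper to compare your proposal against.

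That said, your proof is correct and is essentially the classical argument one finds in Kato: shift the form to be strictly positive, use closedness to realize $(\Dom(h),\tilde h)$ as a Hilbert space, apply the Riesz representation theorem to produce the bounded symmetric injective operator $G$, and define $H$ as (a shift of) $G^{-1}$. The verifications of $i)$--$iii)$ and of uniqueness are handled correctly, including the point that $\operatorname{Ran}(G)$ is dense in the form norm. One small remark on presentation: you assert that ``an injective, non-negative, bounded self-adjoint operator has a self-adjoint inverse''; this is true, but since it is the only non-elementary step outside Riesz, it would be worth a one-line justification (e.g.\ $G$ self-adjoint and injective implies $\overline{\operatorname{Ran}(G)}=\ker(G)^\perp=\H$, so $G^{-1}$ is densely defined and symmetric, and $(G^{-1})^*=(G^*)^{-1}=G^{-1}$).
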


\begin{theorem}[Weyl's criterion, {\cite[Thm.~VII.12]{Reed1}}]\label{thm.weyl}
	Let $H$ be a self-adjoint operator on $\H$. A point $\lambda$ belongs to $\sigma(H)$ if, and only if, there exists a sequence $\{\psi_n\}_{n\in \N} \subset \Dom(H)$ such that $\|\psi_n\|=1$ for all $n\in \N$ and $\lim_{n\to\infty} \|(H-\lambda)\psi_n\|\to 0$.
	Moreover, $\lambda$ belongs to $\sigma_{\rm ess}(H)$ if, and only if, in addition to the above properties the $\{\psi_n\}$ converges weakly to zero in $\H$.
\end{theorem}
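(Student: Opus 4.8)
My plan is to derive Weyl's criterion from the spectral theorem for the self-adjoint operator $H$. Writing $H=\int_\R \mu \, \dd E(\mu)$ for the associated projection-valued spectral measure $E$, I would translate both assertions into statements about the ranges of the spectral projections $E(I_\varepsilon)$ of the shrinking intervals $I_\varepsilon:=(\lambda-\varepsilon,\lambda+\varepsilon)$.

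For the first part (characterization of $\sigma(H)$), the ``if'' direction is an elementary resolvent estimate: if a normalized sequence $\{\psi_n\}\subset\Dom(H)$ with $\|(H-\lambda)\psi_n\|\to 0$ existed while $\lambda\notin\sigma(H)$, then the bounded resolvent would give $1=\|\psi_n\|=\|(H-\lambda)^{-1}(H-\lambda)\psi_n\|\le\|(H-\lambda)^{-1}\|\,\|(H-\lambda)\psi_n\|\to 0$, a contradiction. For the ``only if'' direction I would use that $\lambda\in\sigma(H)$ is equivalent, by the spectral theorem, to $E(I_\varepsilon)\neq 0$ for every $\varepsilon>0$; picking a unit vector $\psi_n$ in the range of $E(I_{1/n})$ yields $\|(H-\lambda)\psi_n\|^2=\int_{I_{1/n}}|\mu-\lambda|^2\,\dd\|E(\mu)\psi_n\|^2\le 1/n^2\to 0$, the desired Weyl sequence.

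For the second part (characterization of $\spes(H)$) I would use that, for self-adjoint $H$, $\lambda\in\spes(H)$ holds if and only if the range of $E(I_\varepsilon)$ is infinite-dimensional for every $\varepsilon>0$ (equivalently, $\lambda$ is not an isolated eigenvalue of finite multiplicity). For ``only if'' I would build an orthonormal sequence inductively: given $\psi_1,\dots,\psi_{n-1}$, the infinite-dimensionality of the range of $E(I_{1/n})$ lets me choose a unit $\psi_n$ in that range orthogonal to all previous ones; the same estimate as above gives $\|(H-\lambda)\psi_n\|\le 1/n\to 0$, while orthonormality forces $\psi_n\rightharpoonup 0$ weakly by Bessel's inequality. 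For ``if'', assuming a normalized, weakly null sequence with $\|(H-\lambda)\psi_n\|\to 0$, the first part already gives $\lambda\in\sigma(H)$, and I argue by contradiction that $\lambda$ cannot be an isolated eigenvalue of finite multiplicity: if it were, set $P:=E(\{\lambda\})$, a finite-rank projection commuting with $H$, and pick $\varepsilon>0$ with $\sigma(H)\cap I_\varepsilon=\{\lambda\}$. Splitting $\psi_n=P\psi_n+(1-P)\psi_n$, compactness of $P$ together with $\psi_n\rightharpoonup 0$ gives $\|P\psi_n\|\to 0$, while the spectral gap bound $\|(H-\lambda)\phi\|\ge\varepsilon\|\phi\|$ on the reducing subspace $\operatorname{Ran}(1-P)\cap\Dom(H)$ combined with $[P,H]=0$ gives $\varepsilon\|(1-P)\psi_n\|\le\|(1-P)(H-\lambda)\psi_n\|\le\|(H-\lambda)\psi_n\|\to 0$; hence $\|\psi_n\|\to 0$, contradicting $\|\psi_n\|=1$.

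The main obstacle is the ``if'' direction of the essential-spectrum part, namely excluding that a weakly null singular sequence concentrates at a discrete eigenvalue. The two facts that make the argument close are the compactness of the finite-rank spectral projection $P$, which upgrades weak convergence to zero into norm convergence on the eigenspace, and the quantitative spectral-gap estimate $\|(H-\lambda)\phi\|\ge\varepsilon\|\phi\|$ on the complementary reducing subspace; both rest on the spectral theorem and on the fact that $P$ commutes with $H$.
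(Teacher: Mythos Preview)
The paper does not prove this statement; it is quoted in the appendix as a reference result from Reed--Simon and is merely used as a tool in the proof of Thm.~\ref{thm.ess}. So there is no ``paper's own proof'' to compare against.

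That said, your proposal is correct and is the standard spectral-theorem argument. A couple of minor remarks: in the ``only if'' direction of the first part you should note explicitly that any unit vector in $\operatorname{Ran}E(I_{1/n})$ automatically lies in $\Dom(H)$ (since $\int|\mu|^2\,\dd\|E(\mu)\psi_n\|^2$ is finite, the integrand being bounded on a bounded interval); and in the ``if'' direction of the essential-spectrum part your spectral-gap estimate $\|(H-\lambda)\phi\|\ge\varepsilon\|\phi\|$ on $\operatorname{Ran}(1-P)\cap\Dom(H)$ follows because $H$ restricted to this reducing subspace has spectrum contained in $\sigma(H)\setminus\{\lambda\}\subset\R\setminus I_\varepsilon$. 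With these clarifications the argument is complete.
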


\begin{theorem}[KLMN, {\cite[Thm.~X.17]{Reed2}}]\label{thm.KLMN}
Let $h_0: \Dom(h_0) \times \Dom(h_0) \to \C$ be a densely defined, symmetric, non-negative and closed sesquilinear form in $\mathcal{H}$. Let $v$ be a symmetric sesquilinear form satisfying
\begin{equation}
\begin{aligned}
1. & \quad	 \Dom(h_0) \subset \Dom(v),  \\
2. & \quad	 \forall \psi \in \Dom(h_0), \quad |v[\psi]| \leq a \, h_0[\psi] + b \, \|\psi\|^2,		 
\end{aligned}
\end{equation}
where $a$, $b$ are non-negative and $a<1$. 
Then there exists a unique self-adjoint and bounded from below operator $H$, associated with the closed symmetric sesquilinear form 
\begin{equation}
h:=h_0+v, \qquad \Dom(h):=\Dom(h_0).
\end{equation}
\end{theorem}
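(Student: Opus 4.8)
The plan is to verify that the form $h := h_0 + v$ with $\Dom(h) := \Dom(h_0)$ satisfies the hypotheses of the first representation theorem (Thm.~\ref{thm.kato}) --- namely that it is densely defined, symmetric, bounded from below, and closed. Once these four properties are established, the existence of a unique self-adjoint, bounded-from-below operator $H$ associated with $h$ follows directly, the uniqueness being furnished by part $iii)$ of that theorem.

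The first three properties are immediate. The domain $\Dom(h)=\Dom(h_0)$ is dense because $h_0$ is densely defined, and $h$ is symmetric as a sum of the symmetric forms $h_0$ and $v$. For semi-boundedness I would rewrite assumption $2$ as $v[\psi] \geq -a\,h_0[\psi] - b\,\|\psi\|^2$, which together with $h_0 \geq 0$ and $a<1$ gives
\[
h[\psi] = h_0[\psi] + v[\psi] \geq (1-a)\,h_0[\psi] - b\,\|\psi\|^2 \geq -b\,\|\psi\|^2,
\]
so that $h$ is bounded from below by $-b$.

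The core of the argument is closedness, which I would obtain by showing that the form norm of $h$ is equivalent to that of $h_0$ on the shared domain. Set $\|\psi\|_{h_0}^2 := h_0[\psi] + \|\psi\|^2$ and $\|\psi\|_{h}^2 := h[\psi] + (b+1)\|\psi\|^2$; the shift $b+1$ renders the latter a genuine norm, since $h\geq -b$ forces $\|\psi\|_h^2 \geq \|\psi\|^2$. Applying assumption $2$ from below and from above yields, respectively,
\[
\|\psi\|_h^2 \geq (1-a)\,h_0[\psi] + \|\psi\|^2 \geq (1-a)\,\|\psi\|_{h_0}^2,
\]
\[
\|\psi\|_h^2 \leq (1+a)\,h_0[\psi] + (2b+1)\,\|\psi\|^2 \leq 2(1+b)\,\|\psi\|_{h_0}^2,
\]
so the two norms are equivalent. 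Since $h_0$ is closed, $\Dom(h_0)$ is complete in $\|\cdot\|_{h_0}$, and by the equivalence it is complete in $\|\cdot\|_h$ as well; hence $h$ is closed.

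With all hypotheses of Thm.~\ref{thm.kato} verified, that theorem delivers the unique self-adjoint, bounded-from-below operator $H$ associated with $h$, which is exactly the claim. The only delicate point is the norm equivalence underpinning closedness, and it is precisely here that the strict inequality $a<1$ is indispensable: if $a=1$ the lower bound would degenerate ($1-a=0$), control of $h_0[\psi]$ by $\|\psi\|_h$ would be lost, and $h$ could fail to be closed --- reflecting the fact that a form with relative bound exactly $1$ need not generate a self-adjoint operator by this route.
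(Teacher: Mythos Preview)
Your proof is correct and follows the standard argument for the KLMN theorem. Note, however, that the paper does not supply its own proof of this statement: Thm.~\ref{thm.KLMN} appears in the appendix purely as a quoted reference result from \cite[Thm.~X.17]{Reed2}, so there is no in-paper proof to compare against. What you have written is essentially the proof one finds in Reed--Simon: verify density, symmetry, and the lower bound $h\geq -b$ directly, then establish closedness via the equivalence of the form norms $\|\cdot\|_h$ and $\|\cdot\|_{h_0}$, which hinges on $a<1$. Your bounds are clean and the remark on why $a<1$ is essential is apt.
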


%\addcontentsline{toc}{section}{References}
{\footnotesize
\bibliographystyle{ieeetr}
}
\bibliography{bibliography-r3xs1}

\end{document}